\newcommand{\extended}[1]{}    
\newcommand{\short}[1]{#1}     
\renewcommand{\extended}[1]{#1} 
\renewcommand{\short}[1]{}      
\setlist[itemize]{leftmargin=*}
\newcolumntype{h}{>{\setbox0=\hbox\bgroup}c<{\egroup}@{}}
\crefname{algorithm}{alg.}{algs.}
\Crefname{algorithm}{Algorithm}{Algorithms}
\crefname{table}{tab.}{tabs.}
\Crefname{table}{Table}{Tables}
\crefname{section}{Section}{Sections}
\Crefname{section}{Section}{Sections}
\colorlet{colorAttribute}{PineGreen!75!green!85!black}
\colorlet{colorCondition}{RedOrange}
\colorlet{colorTransition}{Blue}
\newcommand{\eg}{e.g.\xspace}
\newcommand{\ie}{i.e.\xspace}
\newcommand{\st}{s.t.\xspace}
\Crefname{section}{Sec.}{Secs.}
\crefname{section}{section}{sections}
\Crefname{subsection}{Sec.}{Secs.}
\crefname{subsection}{section}{sections}
\Crefname{equation}{Eq.}{Eqs.}
\crefname{equation}{equation}{equations}
\Crefname{definition}{Def.}{Defs.}
\crefname{definition}{definition}{definitions}
\Crefname{algorithm}{Alg.}{Algs.}
\crefname{algorithm}{algorithm}{algorithms}
\crefname{table}{table}{tables}
\Crefname{figure}{Fig.}{Figs.}
\crefname{figure}{figure}{figures}
\Crefname{example}{Ex.}{Exs.}
\crefname{example}{example}{examples}
\renewcommand{\paragraph}{\@startsection{paragraph}{4}{0pt}%
  {.8ex plus 0.2ex minus 0.2ex}%
  {-0.5em}%
  {\bfseries}}
\newcommand\dontbreakpar{\par\nobreak\@afterheading}
\def\event/{action}
\def\Event/{Action}
\newcommand{\set}[1]{\{{#1}\}}
\newcommand{\Reals}{\mathbb{R}}
\newcommand{\fpart}{\rightharpoonup}
\newcommand{\para}[1]{\smallskip\noindent\textbf{#1}}
\newcommand{\MAS}{MAS\xspace}
\newcommand{\AMAS}{AMAS\xspace}
\newcommand{\DAMAS}{DAMAS\xspace}
\newcommand{\DMAS}{DMAS\xspace}
\newcommand{\TAMAS}{CAMAS\xspace}
\newcommand{\TMAS}{CMAS\xspace}
\newcommand{\TDCGS}{TDCGS\xspace}
\newcommand{\DTS}{DTS\xspace}
\newcommand{\ADTS}{ADTS\xspace}
\newcommand{\CTS}{CTS\xspace}
\newcommand{\ACTS}{ACTS\xspace}
\newcommand{\Agents}{\mathcal{A}}
\newcommand{\A}{\ensuremath{A}}
\newcommand{\Opp}{\ensuremath{O}} 
\newcommand{\Locations}{\ensuremath{L}}
\newcommand{\loc}{\ensuremath{l}}
\newcommand{\Events}{\ensuremath{Act}}    
\newcommand{\JEvents}{\ensuremath{\mathit{JAct}}}
\newcommand{\evt}{\ensuremath{a}}    
\newcommand{\jevt}{\ensuremath{\alpha}}
\newcommand{\Prot}{\ensuremath{P}}
\newcommand{\Trans}{\ensuremath{T}}
\newcommand{\Agent}{\ensuremath{Agent}}
\newcommand{\model}{\mathit{M}}
\newcommand{\States}{\ensuremath{S}}
\renewcommand{\state}{\ensuremath{s}}
\newcommand{\TStates}{\ensuremath{\mathcal{TS}}}
\newcommand{\tstate}{\ensuremath{q}}
\newcommand{\CStates}{\ensuremath{\mathcal{CS}}}
\newcommand{\cstate}{\ensuremath{q}}
\newcommand{\Clocks}{\mathcal{X}}
\newcommand{\Invariant}{\mathcal{I}}
\newcommand{\Constraints}{\mathcal{C}_\Clocks}
\newcommand{\Constraintsi}{\mathcal{C}_{\Clocks_i}}
\newcommand{\cc}{\mathfrak{cc}}
\newcommand{\PV}{\mathit{PV}}
\newcommand{\prop}[1]{\ensuremath{\mathsf{{#1}}}}
\newcommand{\Val}{\ensuremath{V}}
\def\AG {AG}
\newcommand{\coop}[1]{\langle\!\langle{#1}\rangle\!\rangle}
\newcommand{\Epath}{\mathsf{\exists}}
\newcommand{\Apath}{\mathsf{\forall}}
\newcommand{\Next}{\mathtt{X}\,}
\newcommand{\Sometm}{\mathtt{F}\,}
\newcommand{\Always}{\mathtt{G}\,}
\newcommand{\Until}{\,\mathtt{U}\,}
\newcommand{\Release}{\,\mathtt{R}\,}
\newcommand{\seq}{\pi}
\newcommand{\satisf}[1][]{\models_{_{#1}}}
\newcommand{\strat}{\sigma}
\newcommand{\stratstyle}[1]{\ensuremath{\mathrm{#1}}}
\newcommand{\strattype}{\ensuremath{Y}}
\newcommand{\ir}{\stratstyle{ir}\xspace}
\newcommand{\iR}{\stratstyle{iR}\xspace}
\newcommand{\Ir}{\stratstyle{Ir}\xspace}
\newcommand{\IR}{\stratstyle{IR}\xspace}
\newcommand{\IrT}{\stratstyle{IrT}\xspace}
\newcommand{\IRT}{\stratstyle{IRT}\xspace}
\newcommand{\outcome}{\mathit{out}}
\newcommand{\outm}{\mathcal{E}}
\newcommand{\history}{\mathit{h}}
\newcommand{\History}{\mathit{H}}
\newcommand{\timetype}{\ensuremath{Z}}
\newcommand{\lan}[1]{\ensuremath{\mathbf{#1}}\xspace}
\newcommand{\LTL}[1][]{\lan{LTL_{\stratstyle{#1}}}}
\newcommand{\CTL}[1][]{\lan{CTL}}
\newcommand{\CTLK}[1][]{\lan{CTLK}}
\newcommand{\SCTLK}[1][]{\lan{SCTL_{\stratstyle{#1}}K}}
\newcommand{\CTLs}{\lan{CTL^*}}
\newcommand{\STCTLs}{\lan{STCTL^*}}
\newcommand{\ATL}[1][]{\lan{ATL_{\stratstyle{#1}}}}
\newcommand{\ATLs}[1][]{\lan{ATL_\stratstyle{#1}^*}}
\newcommand{\TCTL}[1][]{\lan{TCTL_{\stratstyle{#1}}}}
\newcommand{\TATL}[1][]{\lan{TATL_{\stratstyle{#1}}}}
\newcommand{\TATLs}[1][]{\lan{TATL_\stratstyle{#1}^*}}
\newcommand{\MTL}[1][]{\lan{MTL}}
\newcommand{\MITL}[1][]{\lan{MITL}}
\newcommand{\TPTL}[1][]{\lan{TPTL}}
\newcommand{\SCTL}[1][]{\lan{SCTL_{\stratstyle{#1}}}}
\newcommand{\STCTL}[1][]{\lan{STCTL_{\stratstyle{#1}}}}
\newcommand{\GL}{\lan{GL}}
\newcommand{\ATLu}[1][]{\lan{ATL_{\stratstyle{#1}}^U}}
\newcommand{\ATLsu}[1][]{\lan{ATL_{\stratstyle{#1}}^{*U}}}
\newcommand{\SCTLu}[1][]{\lan{SCTL_{\stratstyle{#1}}^U}}
\newcommand{\TATLd}[1][]{\lan{TATL_{\stratstyle{#1}}^D}}
\newcommand{\STCTLc}[1][]{\lan{STCTL_{\stratstyle{#1}}^C}}
\newcommand{\TATLsc}[1][]{\lan{TATL_{\stratstyle{#1}}^{*C}}}
\newcommand{\untm}{\stratstyle{U}\xspace}
\newcommand{\disc}{\stratstyle{D}\xspace}
\newcommand{\cont}{\stratstyle{C}\xspace}
\newcommand{\mcheck}[3]{\ensuremath{\textsc{mcheck{#1}}_{#2}({#3})}}
\newcommand{\complexity}[1]{\ensuremath{\mathbf{#1}}}
\newcommand{\V}{\ensuremath{V}\xspace} 
\newcommand{\EA}{\ensuremath{EA}\xspace} 
\newcommand{\imitator}{IMITATOR\xspace}
\newcommand{\lexpr}{\ensuremath{\preceq_e}}
\newcommand{\ldist}{\ensuremath{\preceq_d}}
\renewcommand\footnotetextcopyrightpermission[1]{}
\title[STCTL]{Strategic (Timed) Computation Tree Logic}
\author{Jaime Arias}
\affiliation{
  \institution{LIPN, CNRS UMR 7030,\\Universit\'{e} Sorbonne Paris Nord}
  \city{Villetaneuse}
  \country{France}}
\email{arias@lipn.univ-paris13.fr}
\author{Wojciech Jamroga}
\affiliation{
  \institution{Institute of Computer Science,\\Polish Academy of Sciences, and}
  \city{SnT, University of Luxembourg}
  \country{}}
\email{jamroga@ipipan.waw.pl}
\author{Wojciech Penczek}
\affiliation{
  \institution{Institute of Computer Science,\\Polish Academy of Sciences}
  \city{Warsaw}
  \country{Poland}}
\email{penczek@ipipan.waw.pl}
\author{Laure Petrucci}
\affiliation{
  \institution{LIPN, CNRS UMR 7030,\\Universit\'{e} Sorbonne Paris Nord}
  \city{Villetaneuse}
  \country{France}}
\email{petrucci@lipn.univ-paris13.fr}
\author{Teofil Sidoruk}
\affiliation{
  \institution{\textsuperscript{1} Institute of Computer Science, PAS}
	\institution{\textsuperscript{2} Faculty of Math. and Inf. Science, Warsaw University of Technology}
  \city{}
  \country{}}
\email{t.sidoruk@ipipan.waw.pl}
\begin{abstract}
We define extensions of \CTL and \TCTL with strategic operators,
called Strategic \CTL (\SCTL) and Strategic \TCTL (\STCTL), respectively.
For each of the above logics we give a synchronous and asynchronous semantics,
\ie{} \STCTL is interpreted over networks of extended Timed Automata (TA)
that either make synchronous moves or synchronise via joint actions.
We consider several semantics regarding information:
imperfect (i) and perfect (I), and
recall: imperfect (r) and perfect (R).
We prove that \SCTL is more expressive than \ATL for all semantics,
and this holds for the timed versions as well.
Moreover,
the model checking problem for \SCTL[\ir] is of the same complexity as for \ATL[\ir],
the model checking problem for \STCTL[\ir] is of the same complexity as for \TCTL,
while for \STCTL[\iR] it is undecidable as for \ATL[\iR].
The above results suggest to use \SCTL[\ir] and \STCTL[\ir] in practical applications.
Therefore, we use the tool \imitator to support model checking of \STCTL[\ir].
\end{abstract}
\keywords{timed automata; model checking; timed logics; strategy logics}
\newcommand{\BibTeX}{\rm B\kern-.05em{\sc i\kern-.025em b}\kern-.08em\TeX}
\begin{document}

\maketitle

\pagestyle{plain}
\pagenumbering{gobble}


\section{Introduction}
\label{sec:intro}

Alternating-time temporal logics \ATLs\ and \ATL~\cite{Alur97ATL,Alur02ATL}
extend the temporal logic \CTLs\ and \CTL, resp., with the notion of \emph{strategic ability}.
These logics allow for expressing properties of agents (or groups of agents) referring
to what they can achieve.
Such properties can be useful for specification, verification, and reasoning about interaction 
in multi-agent systems 
\cite{Kacprzak04umc-atl,KacprzakP05,Lomuscio15mcmas,JamrogaKP16,Huang14symbolic-epist}.

In this paper we investigate timed extensions of strategy logics,
these already known as well as newly introduced ones. 
One of our main aims is to identify the most expressive logics for which the model checking problem 
is not only decidable, but also of complexity acceptable in practice.
We start with recalling the syntax of \ATL \cite{Alur97ATL,Alur02ATL}
and \TATL \cite{TATL2006}.
Then, we put forward definitions of two new logics: 
Strategic CTL (\SCTL), 
and its timed extension, Strategic Timed CTL (\STCTL).
For each (timed) strategy logic we consider two types of interpretations, over 
models of synchronous (Time) Multi-Agent Systems \MAS 
and asynchronous (Time) Multi-Agent Systems \AMAS.
In addition, Time \MAS and Time \AMAS can be either discrete (D), or continuous (C).
We investigate the model checking problem for \SCTL and \STCTL for
all the semantics, and compare their complexity with other strategy logics.
Notably, we prove that \SCTL is more expressive than \ATL for all semantics, 
and this holds for the timed versions as well.
Moreover, 
the model checking problem for \SCTL[\ir] is of the same complexity as for \ATL[\ir],
the model checking problem for \STCTL[\ir] is of the same complexity as for \TCTL, 
while for \STCTL[\iR] it is undecidable as for \ATL[\iR].
These results suggest to use \SCTL[\ir] and \STCTL[\ir] in practical applications.
Therefore, we demonstrate the feasibility of \STCTL[\ir]  model checking
on a small scalable example using \imitator.

\paragraph*{Related Work.}

\TATL\ \cite{TATL2006} is a discrete-time extension of \ATL \cite{Alur97ATL,Alur02ATL}, 
the subset of \ATLs where each strategic modality is immediately followed  by a single temporal operator.
A hierarchy of semantic variants of \TATL\ was established and studied in \cite{KnapikAPJP19}, 
including counting strategies.
Game Logic (\GL)~\cite{Alur02ATL}, similarly to \SCTL, 
combines path quantifiers with the notion of strategic ability.
\GL is a generalisation of \ATLs over perfect information,
where quantification is possible separately over paths within a strategy outcome.
\CTL timed games \cite{Faella2014} are defined over timed automata with continuous time,
but with specifications given using \CTL and \LTL,
placing them somewhere between untimed \SCTLu[\IR] and \STCTLc[\IR] considered here.
They are shown to be \complexity{EXPTIME}-complete,
demonstrating that model checking of \SCTL[\IR] over continuous time models retains the same complexity as over untimed ones (cf.~\Cref{tab:sync:complexity}).
Ana\-logously to \ATLs, the logic \TATLs\ over continuous time semantics, 
call it \TATLsc, would be a natural counterpart to the discrete-time \TATLs.
However, even without the strategic modality, model checking is 
undecidable for continuous time extensions of \LTL (\MTL and \TPTL) \cite{Bouyer09}. 
This has motivated our choice of \STCTL,
which is applicable where discrete time is insufficient,
more expressive than \TATL,
and yet its model checking is decidable for \stratstyle{r}-strategies.

\paragraph*{Outline.}

First, \Cref{sec:logic} recalls the basic notions of strategic logics.
Synchronous systems are tackled in untimed, discrete time and continuous time settings,
all described in an homogeneous manner.
\Cref{sec:semantics} discusses different types of strategies and gives the semantics of considered logics.
Theoretical results regarding the model checking complexity and expressiveness of \lan{S(T)CTL} are introduced
in \Cref{sec:modch} and \Cref{sec:expressivity}, respectively.
\Cref{sec:async} considers asynchronous systems, pointing out differences from the synchronous case wherever applicable.
\Cref{sec:expe} reports experimental results using the IMITATOR model checker.
Finally, \Cref{sec:conclu} concludes the paper.

\section{\mbox{Reasoning about Strategies and Time}}
\label{sec:logic}

In this section, we define the logical framework to reason about strategic abilities in timed synchronous multi-agent systems.
Our definitions are based on~\cite{AlurD90,Lomuscio97interpreted,Alur02ATL,Schobbens04,TATL2006,KnapikAPJP19}.

\subsection{Syntax of \texorpdfstring{\STCTL}{STCTL} and its Fragments}
\label{sec:syntax}

We begin by introducing the logical formulas of interest.
Assume a countable set $\PV$ of atomic propositions, and a finite set $\Agents$ of agents.
The syntax of \emph{Strategic Timed Computation Tree Logic} (\STCTL) resembles that of Game Logic~\cite{Alur02ATL},
and can be defined by the grammar:
\begin{center}
$\varphi ::= \prop{p} \mid \neg \varphi \mid \varphi\wedge\varphi \mid \coop{A}\gamma$,\\
$\gamma ::= \varphi \mid \neg\gamma \mid \gamma \wedge\gamma 
          \mid \Apath\Next\gamma \mid 
          \Apath\gamma\Until_I\gamma \mid \Epath\gamma\Until_I\gamma \mid \Apath\gamma\Release_I\gamma \mid \Epath\gamma\Release_I\gamma$,
\end{center}
where $\prop{p}\in\PV$ is an atomic proposition, $\A\subseteq\Agents$ is a subset of agents, and $I\subseteq\mathbb{R}_{0+}$ is
an interval with bounds of the form $[n,n']$, $[n,n')$, $(n,n']$, $(n,n')$, $(n,\infty)$, $[n,\infty)$, for $n,n'\in\mathbb{N}$.
$\coop{\A}$ is the strategic operator expressing that the agents in $\A$ have a strategy to enforce the temporal property that follows after it.
$\Apath$ (``for all paths") and $\Epath$ (``there exists a path") are the usual path quantifiers of \CTL.
The temporal operators $\Next,\Until,\Release$ stand for ``next'', ``strong until'', and ``release,'' respectively.
Boolean connectives and the remaining operators $\Sometm$ (``eventually''), $\Always$ (``always'') can be derived as usual.
Notice that we added the next step operator $\Next$ to the syntax of \STCTL (and \TCTL) in order to be able to define 
other logics as syntactic fragments of \STCTL, as follows:
\begin{description}
\item[\SCTL:] untimed Strategic \CTL, obtained from \STCTL by restricting the time intervals $I$ in temporal operators
     $\Until_I,\Release_I$ to $I=[0,\infty)$. So, they are removed from the syntax of \STCTL;
\item[\TCTL:] Timed \CTL, obtained from \STCTL by removing the strategic modality $\coop{\A}$ from the syntax;
\item[\CTL:] ``vanilla'' \CTL, obtained from \SCTL by removing $\coop{\A}$;
\item[\TATL:] Timed Alternating-time Temporal Logic, the fragment of \STCTL where each instance of
     $\coop{\A}$ is immediately followed by $\forall\Next$, $\forall\Until_I$, or $\forall\Release_I$, 
		 then $\forall$ is removed from the syntax;
\item[\ATL:] ``vanilla'' \ATL, obtained from \TATL by restricting the time intervals to $I=[0,\infty)$,
             thus removing them from the syntax;
\end{description}
We can now introduce the syntax and semantics of synchronous MAS
with continuous and discrete time, as well as untimed.

\subsection{Continuous Time Synchronous \MAS}
\label{sec:clocks-intro}
\label{sec:sync}

In \emph{synchronous continuous-time multi-agent systems}, all agents have an associated
set of clocks. All clocks evolve at the same rate (across agents), thus allowing for
delays and instantaneous actions.
We first recall the formal definitions for key notions of \emph{timed systems}~\cite{AlurD90}.
Then, we combine them with the concept of \emph{interpreted systems}~\cite{Lomuscio97interpreted}
which has been successful in modelling synchronous MAS.

\emph{Clocks} are non-negative, real-valued variables;
we denote a finite set of clocks by $\Clocks = \{x_1,\dots,x_{n_\Clocks}\}$
(with a fixed ordering assumed for simplicity).
A \emph{clock valuation} on $\Clocks$ is a $n_\Clocks$-tuple $v$.
We denote:
\begin{itemize}
\item by $v(x_i)$ or $v(i)$, the value of clock $x_i$ in $v$;
\item by $v + \delta$, where $\delta \in \Reals_{0+}$, $v'$ s.t. $v'(x) = v(x) + \delta$ for all $x \in \Clocks$;
\item by v[X := 0], where $X \subseteq \Clocks$, $v'$ s.t. $v'(x) = 0$ for all $x \in X$, and $v'(x) = v(x)$ for all $x \in \Clocks \setminus X$.
\end{itemize}
\noindent
The \emph{clock constraints} over $\Clocks$ are defined by the following grammar:
$\cc := true \mid x_i \sim c \mid x_i - x_j \sim c \mid \cc \land \cc$,
where $x_i,x_j \in \Clocks$, $c \in \mathbb{N}$, and $\sim \in \{\leq,<,=,>,\geq\}$.
The set $\Constraints$ collects all constraints over $\Clocks$.
For $\cc \in \Constraints$, the satisfaction relation $\models$ is inductively defined as:
\begin{itemize}
\item[] $v \models true$,
\item[] $v \models (x_i \sim c)$ iff $v(x_i) \sim c$,
\item[] $v \models (x_i - x_j \sim c)$ iff $v(x_i) - v(x_j) \sim c$, and
\item[] $v \models (\cc \land \cc')$ iff $v \models \cc$ and $v \models \cc'$.
\end{itemize}
The set of all valuations satisfying $\cc$ is denoted by $\llbracket\cc\rrbracket$.

\begin{definition}[\TMAS]
\label{def:TMAS:def}
A \emph{continuous-time multi-agent system (\TMAS)} consists of $n$ agents
$\Agents = \set{1,\dots,n}$, each associated with a 9-tuple
$\AG_i = \left(\Locations_i, \iota_i, \Events_i, \Prot_i, \Clocks_i, \Invariant_i, \Trans_i,\PV_i, \Val_i \right)$
including:
\begin{itemize}
\item a finite non-empty set of \emph{local states} $\Locations_i=\{\loc_i^1, \loc_i^2,\dots, \loc_i^{n_i}\}$;
\item an \emph{initial local state} $\iota_i\in \Locations_i$;
\item a finite non-empty set of \emph{local actions} $\Events_i\!=\!\{\evt_i^1,\evt_i^2,\ldots,\evt_i^{m_i}\}$;
\item a \emph{local protocol} $\Prot_i: \Locations_i \to 2^{\Events_i} \setminus \{\emptyset\}$;
\item a set of \emph{clocks} $\Clocks_i$;
\item an \emph{invariant} $\Invariant_i \colon \Locations_i \to \Constraintsi$
	    specifying a condition for the \TMAS to stay in a given local state;
\item a (partial) \emph{local transition function}
      $\Trans_i: \Locations_i \times \JEvents \times \Constraintsi \times 2^{\Clocks_i} \fpart \Locations_i$, where
      $\JEvents \triangleq \prod_{i \in \Agents} \Events_i$ is the set of \emph{joint (global) actions} of all agents,
      is \st $\Trans_i(\loc_i,\jevt,\cc,X) = \loc'_i$ for some $\loc'_i \in \Locations_i$ iff $\evt^i \in \Prot_i(\loc_i)$,
      $\cc \in \Constraintsi$, and $X \subseteq \Clocks_i$;
\item a finite non-empty set of \emph{local propositions} $\PV_i=\{\prop{p_i^1},\ldots,\prop{p_i^{r_i}}\}$;
\item a \emph{local valuation function} $\Val_i: \Locations_i \rightarrow 2^{\PV_i}$.
\end{itemize}
\end{definition}
For a local transition $t := \loc \xrightarrow[]{\alpha,\cc,X} \loc'$ in a \TMAS,
$\loc$ and $\loc'$ are the \emph{source} and \emph{target} states,
$\alpha$ is the executed \emph{action},
clock condition $\cc$ is called a \emph{guard},
and $X$ is the set of clocks to be \emph{reset}.

\begin{definition}[Model of \TMAS]
\label{def:TMAS:model}
The \emph{model} of \TMAS\ is a 7-tuple $\model=(\Agents,\States,\iota,\Clocks,\Invariant,\Trans,\Val)$, where:
\begin{itemize}
\item $\Agents = \set{1,\dots,n}$ is the set of \emph{agents};
\item $\States = \prod_{i=1}^n \Locations_i$ is the set of \emph{global states};
\item $\iota=(\iota_1,\dots,\iota_n) \in \States$ is the \emph{initial global state};
\item $\Clocks = \bigcup_{i \in \Agents}\Clocks_i$ is the set of \emph{clocks};
\item $\Invariant(\state) = \bigwedge_{i \in \Agents} \Invariant_i(\state^i)$ is the
      \emph{global invariant},
			where	$\state^i$ is the $i$-th local state of $\state$;
\item $\Trans: \States \times \JEvents \times \Constraints \times 2^{\Clocks} \rightarrow \States$   
	    \st
	$\Trans(\state,\jevt,\bigwedge_{i} \cc_i,\bigcup_{i=1}^n X_i) = \state'$ iff
	$\Trans_i(\state^i,\jevt,\cc_i,X_i) = \state'^i$ for each $1 \leq i \leq n$;
\item a \emph{valuation function} $\Val\colon\States\to2^\PV$, where $\PV = \bigcup_{i=1}^n \PV_i$.
\end{itemize}
\end{definition}
The continuous (dense) semantics of time defines
\emph{concrete states} as tuples of global states and non-negative real clock valuations.

\begin{definition}[\CTS]
\label{def:TMAS:semantics}
The \emph{concrete model} of a \TMAS model $\model = (\Agents,\States,\iota,\Clocks,\Invariant,\Trans,\Val)$
is given by its \emph{Continuous Transition System} (\CTS) $(\Agents,\CStates,\cstate_\iota,\to_c,V_c)$, where:
\begin{itemize}
\item $\Agents = \set{1,\dots,n}$ is the set of \emph{agents};
\item $\CStates = \States \times \Reals_{0+}^{n_\Clocks}$ is the set of \emph{concrete states};
\item $\cstate_\iota = (\iota,v) \in \CStates$, such that $v(x_i)=0$ for each $x_i \in \Clocks$, is the \emph{concrete initial state};
\item $\to_c \subseteq \CStates \times (\JEvents \cup \Reals_{0+}) \times \CStates$ is the \emph{transition relation},
			defined by time- and action successors as follows:\\
$(\state,v) \xrightarrow[]{\delta}_c (\state,v+\delta)$
				for $\delta \in \Reals_{0+}$ and $v, v+\delta \in \llbracket \Invariant(\state) \rrbracket$,\\
$(\state,v) \xrightarrow[]{\jevt}_c (\state',v')$
				iff there are $\jevt \in \JEvents$, $\cc \in \Constraints$, $X \subseteq \Clocks$~s.t.:
				$\state \xrightarrow[]{\jevt,\cc,X} \state' \in \Trans$,
				$v \in \llbracket\cc\rrbracket$,
				$v \in \llbracket\Invariant(\state)\rrbracket$,
				$v' = v[X := 0]$,
				$v'~\in~\llbracket\Invariant(\state')\rrbracket$,
\item $V_c(\state,v) = \Val(\state)$ is the \emph{valuation function}.
\end{itemize}
\end{definition}
Intuitively, there are two types of transitions:
delay transitions $\xrightarrow[]{\delta}_c$,
which increase the clock valuation(s) by a given $\delta$ but do not change the global state,
and action transitions $\xrightarrow[]{\jevt}_c$ which correspond to executing
an enabled action in the \TMAS and move the latter to a successor state,
possibly resetting some clocks.
Note that if the set of clocks is empty, the concrete model contains only action transitions, 
and thus it is identical with the model itself.

\subsection{Discrete Time Synchronous \MAS}

Synchronous models with discrete time were considered for reasoning 
in \TATL in~\cite{TATL2006,KnapikAPJP19} using Tight Durational Concurrent Game Structures (\TDCGS),
which is a flat model, as opposed to a network of synchronising models, used in this paper.
This gives an equivalent model whose definition is consistent with \MAS. 
Indeed, \emph{synchronous discrete time} \MAS
extend \MAS with a constant duration associated with each individual transition.

\begin{definition}[\DMAS]
\label{def:DMAS:def}
A \emph{discrete-time multi-agent system (\DMAS)} consists of $n$ agents
$\Agents = \set{1,\dots,n}$, where each agent $i \in \Agents$ is associated with a 7-tuple
$\AG_i = \left( \Locations_i,\iota_i, \Events_i, \Prot_i, \Trans_i,\PV_i, \Val_i \right)$ including:
\begin{itemize}
\item a finite non-empty set of \emph{local states} $\Locations_i=\{\loc_i^1, \loc_i^2,\dots, \loc_i^{n_i}\}$;
\item an \emph{initial local state} $\iota_i\in \Locations_i$;
\item a finite non-empty set of \emph{local actions} $\Events_i\!=\!\{\evt_i^1,\evt_i^2,\ldots,\evt_i^{m_i}\}$;
\item a \emph{local protocol} $\Prot_i: \Locations_i \to 2^{\Events_i} \setminus \{\emptyset\}$
      selecting the actions available at each local state;
\item a (partial) \emph{local transition function} $\Trans_i: \Locations_i \times \JEvents
			\rightarrow \Locations_i \times \mathbb{N}_{+}$ such that $\Trans_i(\loc_i,\jevt)$ is defined iff
			$\alpha^i \in \Prot_i(\loc_i)$, where
			$\jevt^i$ is the action of agent $i \in \Agents$ in the \emph{joint action} $\jevt \in \JEvents$;
			$\Trans_i(\loc_i,\jevt)=(\loc'_i,\delta_i)$ where $\delta_i$ is the \emph{duration} of the transition;
\item a finite non-empty set of \emph{local propositions} $\PV_i=\{\prop{p_i^1},\ldots,\prop{p_i^{r_i}}\}$;
\item a \emph{local valuation function} $\Val_i: \Locations_i \rightarrow 2^{\PV_i}$.
\end{itemize}
\end{definition}

This definition is similar to \cref{def:MAS:def} with a duration associated with each
transition.
As for \TMAS, the model of a \DMAS is defined, that describes its behaviour.

\begin{definition}[Model of \DMAS]
\label{def:DMAS:model}
The \emph{model} of \DMAS\ is a 5-tuple $\model=(\Agents,\States,\iota,\Trans,\Val)$, where:
\begin{itemize}
\item $\Agents = \set{1,\dots,n}$ is the set of \emph{agents};
\item $\States = \prod_{i=1}^n \Locations_i$ is the set of \emph{global states};
\item $\iota=(\iota_1,\dots,\iota_n) \in \States$ is the \emph{initial global state};
\item $\Trans: \States\times \JEvents \rightarrow \States \times\mathbb{N}_{+}$ is the partial \emph{global
			transition function}, such that $\Trans(\state,\jevt)= (\state',\delta)$ iff $\Trans_i(\state^i,\jevt) =
			(\state'^i,\delta)$ for all $i \in \Agents$,
			where $s^i$ is the $i$-th local state of $s$;
\item $\Val: \States \rightarrow 2^{\PV}$ is the \emph{valuation function} such that
			$\Val((\loc_1,\ldots,\loc_n)) = \bigcup_{i=1}^n \Val_i(\loc_i)$, where $\PV = \bigcup_{i=1}^n \PV_i$.
\end{itemize}
\end{definition}
This definition enforces all synchronising actions to have the same
duration in their respective local components.
Another possibility could be to have individual durations $\delta_i$ for each component,
and the longest duration $\max_{i=1}^n\delta_i$ for the synchronised transition.
This would mimic actions of the different components
taking place together, with the longest slowing down the whole execution.

\begin{definition}[\DTS]
\label{def:DMAS:semantics}
The concrete model of a \DMAS is given by its \emph{Duration Transition System}
(\DTS) $(\Agents,\TStates,\tstate_\iota,\outm,V_d)$, where:
\begin{itemize}
\item $\Agents = \set{1,\dots,n}$ is the set of \emph{agents};
\item $\TStates = \States\times\mathbb{N}$ is a set of \emph{timed states};
\item $\tstate_\iota=(\iota,0)\in\TStates$ is the \emph{initial timed state};
\item $\outm\colon\TStates\times\JEvents\to\TStates$ is a (partial)
			\emph{transition function} such that
			$\outm((\state, d), \jevt) = (\state', d+\delta) \text{ iff }
			\Trans(\state, \jevt) = (\state', \delta)$,
			for $\state,\state'\in \States$, $\jevt\in \JEvents$,
			$d \in\mathbb{N}$, and $\delta \in\mathbb{N}_{+}$;
\item $V_d(\state,d) = \Val(\state)$ is the \emph{valuation function}.
\end{itemize}
\end{definition}
It is straightforward to see that these compositional definitions correspond to those of
the flat structures (\TDCGS and \DTS) in~\cite{KnapikAPJP19}.

\subsection{Untimed \MAS}

Untimed \MAS can be defined as Timed \MAS with no clocks, see below.
Note that the definition is essentially equivalent to the concept of
an \emph{interpreted system} in~\cite{Lomuscio97interpreted}.
\begin{definition}[\MAS]\label{def:MAS:def}
An \emph{untimed multi-agent system}, simply \MAS, is a \TMAS with every $\Clocks_i = \emptyset$.
The model and concrete model of \MAS are equal and defined as in Definition~\ref{def:TMAS:model},
without clocks.
\end{definition}

\section{Semantics of Logics}
\label{sec:semantics}

We start with defining strategies and their outcomes.

\subsection{Strategies}
\label{sec:strategies}

The taxonomy proposed by Schobbens~\cite{Schobbens04}
defines four strategy types based on agents' \emph{state information}:
perfect~(\stratstyle{I}) vs.~imperfect~(\stratstyle{i}),
and their \emph{recall of state history}:
perfect~(\stratstyle{R}) vs.~imperfect~(\stratstyle{r}).

Intuitively, a strategy can be seen as a conditional plan that dictates the choice of an agent in each possible situation.
In \emph{perfect information} strategies, agents have complete knowledge about global states of the model and thus can make 
different choices in each one.
Under \emph{imperfect information},
decisions can only be made based on local states.
\emph{Perfect recall} assumes that agents have access to a full history of previously visited states,
whereas under \emph{imperfect recall} only the current state is explicitly known.
Formally:

\begin{itemize}
\item A \emph{memoryless imperfect information $(\ir)$ strategy} for $i\in\Agents$ is a function
			$\strat_i \colon \Locations_i \to \Events_i$ such that $\strat_i(\loc) \in \Prot_i(\loc)$
			for each $\loc \in \Locations_i$.
\item A \emph{memoryless perfect information $(\Ir)$ strategy} for $i\in\Agents$ is a function
			$\strat_i \colon \States \to \Events_i$ such that $\strat_i(\state) \in \Prot_i(\state^i)$
			for each $\state \in \States$.
\item A \emph{perfect recall, imperfect information $(\iR)$ strategy} for $i\in\Agents$ is a function
			$\strat_i \colon \Locations_i^+ \to \Events_i$ s.t. $\strat_i(\history) \in \Prot_i(last(\history))$
			for each $\loc \in \Locations_i$.
\item A \emph{perfect recall, perfect information $(\IR)$ strategy} for $i\in\Agents$ is a function
			$\strat_i \colon \States^+ \to \Events_i$ s.t. $\strat_i(\History) \in \Prot_i(last(\History)^i)$
			for each $\state \in \States$.
\end{itemize}
By $\History \in \States^+$ (resp. $\history \in \Locations_i^+$),
we denote a history of global (resp. $i$'s local) states,
and $last(H)$ (resp. $last(h)$) refers to its last state.
The notion of a strategy can be generalized to an agent coalition $\A\subseteq\Agents$,
whose \emph{joint strategy} $\strat_\A$ is a tuple of strategies, one for each $i\in\A$.

We now formally define executions in concrete models of MAS.

\begin{definition}[Execution]
Let $(\Agents,\CStates,\cstate_\iota,\to_c~,V_c)$ be a \CTS. Its execution from 
$\cstate_0 = (\state_0,v_0)$ is
$\seq = \cstate_0,\delta_0,\cstate'_0,\jevt_0,\cstate_1,\delta_1,\cstate'_1,\jevt_1,\dots$,
where $\cstate_k = (\state_{2k}, v_{2k})$, $\cstate'_k = (\state_{2k+1}, v_{2k+1})$,
such that 
$\;\delta_{k} \in \Reals_{0+}$,
$\jevt_{k} \in \JEvents$,
$\cstate_{k} \xrightarrow[]{\delta_k}_c \cstate'_{k} \xrightarrow[]{\jevt_{k}}_c \cstate_{k+1}$,
for each $k \geq 0$.

An execution of a \DTS $(\Agents,\TStates,\tstate_\iota,\outm,V_d)$ from $\tstate_0$ is
$\seq = \tstate_0,\delta_0,\jevt_0,$ $\tstate_1,\delta_1,\jevt_1,\dots$,
where $\tstate_k = (\state_k,d_k)$, $d_0 = 0$,
s.t.
$\jevt_k \in \JEvents$,
$\delta_k \in \mathbb{N}_{+}$,
$\outm((\state_k,d_k),\jevt_k) = (\state_{k+1},d_{k+1})$,
$d_{k+1} = d_k + \delta_k$,
for each $k \geq 0$.

An execution of a (concrete) model of an untimed \MAS from $\state_0$
is $\seq = \state_0,\jevt_0,\state_1,\jevt_1,\dots$,
s.t.
$\jevt_k \in \JEvents$,
$\state_k \xrightarrow[]{\jevt_k}_c \state_{k+1}$,
for each $k \geq 0$.
\end{definition}
\noindent
Note that if the set of clocks is empty in \TMAS, then the executions of \CTS contain only action transitions,
as in an untimed \MAS.

The outcome of a strategy $\strat_\A$ represents executions where the agents in $\A$ adopt $\strat_\A$,
\ie it is the set of all paths in the model that may occur when the coalition strictly follows the strategy,
while opponents freely choose from \event/s permitted by their protocols.

\begin{definition}[Outcome]\label{def:outcome}
Let $\A\subseteq\Agents$, $\strattype \in \set{\ir,\Ir,\iR,\IR}$,
$\model^\cont$ (resp. $\model^\disc$, $\model^\untm$) be the model of a \TMAS (resp. a \DMAS, an untimed \MAS),
and let $\seq^\cont = \cstate_0,\delta_0,\cstate'_0,\jevt_0,\dots$
(resp. $\seq^\disc = \tstate_0,\delta_0,\jevt_0,\dots$;~$\seq^\untm = \state_0,\jevt_0,\dots$)
be an execution of the corresponding concrete model.

The \emph{outcome} of $\strattype$-strategy $\strat_\A$ in state $g^\timetype$ of the concrete model of $\model^\timetype$,
where $\timetype \in \set{\cont, \disc, \untm}$, $g^\timetype = \cstate_0$ for $\timetype \in \set{\cont, \disc}$, 
and $g^\timetype = \state_0$ for $\timetype = \untm$,
is the set $\outcome_{\model^\timetype}^\strattype(g^\timetype,\strat_\A)$,
such that $\seq^\timetype \in \outcome_{\model^\timetype}^\strattype(g^\timetype,\strat_\A)$
iff for each $m \geq 0$ and each agent $i\in\A$:\\
($\strattype = \ir$): \quad $\jevt_m^i = \strat_i(\state_m^i)$,\vspace{-2mm}\\
($\strattype = \iR$): \quad $\jevt_m^i = \strat_i(\history^\timetype)$, \quad $\history^\timetype = 
																				\begin{cases}
																						\state_0^i,\state_0^i,\state_1^i,\state_1^i,\dots,\state_m^i & \text{if } \timetype = \cont\\
																						\state_0^i,\state_1^i,\dots,\state_m^i & \text{otherwise}
																				\end{cases}$\vspace{-2mm}\\
($\strattype = \Ir$): \quad $\jevt_m^i = \strat_i(\state_m)$,\vspace{-2mm}\\
($\strattype = \IR$): \quad $\jevt_m^i = \strat_i(\History^\timetype)$, \quad $\History^\timetype =
																				\begin{cases}
																						\state_0,\state_0,\state_1,\state_1,\dots,\state_m & \text{if } \timetype = \cont\\
																						\state_0,\state_1,\dots,\state_m & \text{otherwise}
																				\end{cases}$\\
where $\state_j \in \States$ is the global state component
of $\cstate_j$ in $\seq^\cont$ and $\seq^\disc$.
\end{definition}

\subsection{Semantics of \texorpdfstring{\TATL}{TATL} and \texorpdfstring{\STCTL}{STCTL}}

We give the discrete-time semantics of \TATL\ \cite{TATL2006} 
and the continuous-time semantics of \STCTL
for strategies of type $\strattype \in \set{\ir,\Ir,\iR,\IR}$.

\begin{definition}[\TATL Semantics]
\label{def:tatlsem}

Let $\model=(\States,\iota,\Trans,\Val)$ be a \DMAS model,
$(\Agents,\TStates,\tstate_\iota,\outm,V_d)$ be its \DTS,
$s = s_0 \in\States$ a state, 
$\varphi,\psi \in \TATL$,
$\pi = (s_0,d_0),\delta_0,\jevt_0,\dots$ an execution of the \DTS,
and $A \subseteq \Agents$ a set of agents.
The $\strattype$-semantics of \TATL is given by the clauses:

\begin{itemize}
\item $\model, s \models \prop{p}$ iff $\prop{p} \in \Val(s)$,
\item $\model, s \models \neg\varphi$ iff $\model, s \not\models \varphi$,
\item $\model, s \models \varphi\land\psi$ iff $\model, s\models\varphi$ and $\model, s \models\psi$,
\item $\model, s \models \coop{A} X\varphi$ iff there exists a joint $\strattype$-strategy $\strat_A$
s.t. for each $\pi \in \outcome_\model^\strattype(s,\strat_A)$ 
we have $\model, s_1 \models \varphi$,
\item $\model, s \models \coop{A} \varphi U_{I}\psi$ iff there exists a joint $\strattype$-strategy $\strat_A$
s.t. for each $\pi \in \outcome_\model^\strattype(s,\strat_A)$ there exists $i\in \mathbb{N}$
s.t. $d_i \in I$ and $\model, s_i \models \psi$ and
for all $0\le j < i : \model, s_j \models\varphi$,
\item $\model, s \models \coop{A} \varphi R_{I} \psi$ iff there exists a joint
$\strattype$-strategy $\strat_A$ 
s.t. for each $\pi \in \outcome_\model^\strattype(s,\strat_A)$ 
for each $i\in \mathbb{N}$ such that $d_i \in I$ we have $\model, s_i\models\psi$ or
there exists $0\le j < i : \model, s_j\models\varphi$.
\end{itemize}
\end{definition}

\begin{definition}[\STCTL Semantics]
\label{def:stctlsem}
Let $\model = (\Agents,\CStates,\cstate_\iota,\to_c,V_c)$ be a \CTS,
$\cstate_0 = (s,v) = (s_0,v_0) \in \CStates$ a concrete state,
$A \subseteq \Agents$, 
$\varphi,\psi \in \STCTL$,
$\pi = \cstate_0,\delta_0,\cstate'_0,\jevt_0,\ldots$ an execution of $\model$ 
where $\cstate_k = (\state_{2k}, v_{2k})$, $\cstate'_k = (\state_{2k+1}, v_{2k+1})$,
and let $time_{\pi}(\cstate_k)$ = $\sum_{j=0}^{k-1} \delta_j$, $time_{\pi}(\cstate'_k)$ = $\sum_{j=0}^{k} \delta_j$.
The $\strattype$-semantics of \STCTL is given as:
\begin{itemize}
\item $\model, (\state,v) \models \prop{p}$ iff $\prop{p} \in \V_c(\state,v)$,
\item $\model, (\state,v) \models \neg\varphi$ iff $\model, (\state,v) \not\models \varphi$,
\item $\model, (\state,v) \models \varphi\land\psi$ iff $\model, (\state,v)\models\varphi$ and $\model, (\state,v) \models\psi$,
\item $\model, (\state,v) \models \coop{A} \gamma$ iff there exists a joint $\strattype$-strategy $\strat_A$
			such that we have $\model, \outcome_\model^\strattype((\state,v),\strat_A) \models \gamma$, where:
			\begin{itemize}
			\item $\model, \outcome_\model^\strattype((\state,v),\strat_A) \models \prop{p}$ iff $\prop{p} \in \V_c(\state,v)$,
			\item $\model, \outcome_\model^\strattype((\state,v),\strat_A) \models \neg\varphi$ iff
						$\model, \outcome_\model^\strattype((\state,v),\strat_A) \not\models \varphi$,
			\item $\model, \outcome_\model^\strattype((\state,v),\strat_A) \models \varphi \land \psi$ iff
						$\model, \outcome_\model^\strattype((\state,v),\strat_A) \models \varphi$\\ and $\model, \outcome_\model^\strattype((\state,v),\strat_A) \models \psi$,
			\item $\model, \outcome_\model^\strattype((\state,v),\strat_A) \models \forall \Next\varphi$ iff
						for each $\pi \in \outcome_\model^\strattype((\state,v),\strat_A)$ we~have $\model,(\pi,\strat_A,Y) \models \Next\varphi$,
			\item $\model, \outcome_\model^\strattype((\state,v),\strat_A) \models \forall \varphi U_{I}\psi$ iff
						for each $\pi \in \outcome_\model^\strattype((\state,v),\strat_A)$ we have $\model,(\pi,\strat_A,Y) \models \varphi U_{I}\psi$,
			\item $\model, \outcome_\model^\strattype((\state,v),\strat_A) \models \exists \varphi U_{I}\psi$ iff
						for some $\pi \in \outcome_\model^\strattype((\state,v),\strat_A)$ we have $\model,(\pi,\strat_A,Y) \models \varphi U_{I}\psi$,
			\item $\model, \outcome_\model^\strattype((\state,v),\strat_A) \models \forall \varphi R_{I} \psi$ iff
						for each $\pi \in \outcome_\model^\strattype((\state,v),\strat_A)$ we have $\model, (\pi,\strat_A,Y) \models \varphi R_{I}\psi$,
			\item $\model, \outcome_\model^\strattype((\state,v),\strat_A) \models \exists \varphi R_{I} \psi$ iff
						for some $\pi \in \outcome_\model^\strattype((\state,v),\strat_A)$ we have $\model, (\pi,\strat_A,Y) \models \varphi R_{I}\psi$, where
						\begin{itemize}
						\item $\model,(\pi,\strat_A,Y) \models \Next\varphi$ iff 
									$\model, \outcome_\model^\strattype(\state_1,\strat_A) \models \varphi$ \;\; (untimed only)
						\item $\model,(\pi,\strat_A,Y) \models \varphi U_{I}\psi$ iff there is $i\in \mathbb{N}$
									s.t. $time_{\pi}(\state_i,v_i) \in I$ and we have
									$\model, \outcome_\model^\strattype((\state_i,v_i),\strat_A) \models \psi$ and
									for all $0\le j < i : \model, \outcome_\model^\strattype((\state_j,v_j),\strat_A) \models\varphi$,
						\item $\model,(\pi,\strat_A,Y) \models \varphi R_{I}\psi$ iff
									for each $i\in \mathbb{N}$ s.t. $time_{\pi}(\state_i,v_i) \in I$
									we have $\model,\outcome_\model^\strattype((\state_i,v_i),\strat_A) \models\psi$ or
									there exists $0\le j < i : \model, \outcome_\model^\strattype((\state_j,v_j),\strat_A) \models\varphi$.
						\end{itemize}
			\item $\model, \outcome_\model^\strattype((\state,v),\strat_A) \models \coop{A'} \gamma$ iff $\model, (\state,v) \models \coop{A'} \gamma$.
			\end{itemize}
\end{itemize}
\end{definition}		

\noindent
By $\mathcal{L}_\mathcal{S}^\mathcal{M}$, we denote the logical system being considered,
where $\mathcal{L}$ is the syntactic variant (see Section~\ref{sec:syntax}),
$\mathcal{S}$ is the class of strategies (cf.~Section~\ref{sec:strategies}),
and $\mathcal{M}\in\set{\cont,\disc,\untm}$ is the class of continuous-time, discrete-time, and untimed models, respectively.
Superscripts \cont and \untm, if omitted, are assumed to follow from the syntactic variant.

\section{Model Checking Results}
\label{sec:modch}

We now recall complexity results under different semantics for \ATL (\Cref{sec:sync:modch:atl})
and \TATL (\Cref{sec:sync:modch:tatl}).
Then, in \Cref{sec:sync:modch:sctl:r,sec:sync:modch:sctl:R,sec:sync:modch:stctl:r,sec:sync:modch:stctl:R},
we provide new results for \SCTL and \STCTL.
It is important to note that complexity results are given wrt. the \emph{model} size,
as defined in \Cref{sec:logic}.
In particular, note that the model and the concrete model are not equal in each case.
The results are summarised in \Cref{tab:sync:complexity}.

\subsection{Model Checking \texorpdfstring{\ATL}{ATL}}
\label{sec:sync:modch:atl}

The standard fixpoint algorithm for model checking \ATL under perfect information was presented in the original
paper by Alur, Henzinger and Kupferman \cite{Alur02ATL}.
In a nutshell, to verify a formula $\coop{\A}\varphi$, it starts with a candidate set of states
(chosen appropriately depending on $\varphi$) and then iterates backwards
over the abilities of coalition $\A$ at each step \cite{Jamroga15specificationMAS}.
Model checking of \ATL is \complexity{PTIME}-complete \cite{Alur02ATL},
for both memoryless and perfect recall strategies,
since the satisfaction semantics for \ATL[\Ir] and \ATL[\IR] coincide \cite{Jamroga15specificationMAS}.

On the other hand, the fixpoint-based approach cannot be adapted to imperfect information
\cite{Agotnes15handbook,Goranko15stratmas}, making model checking significantly more complex
in this setting:
\complexity{\Delta^P_2} for \ATL[ir] and NP-hard for simple instances
of \ATL[ir] \cite{Jamroga15specificationMAS},
and undecidable for \ATL[iR] \cite{Dima11undecidable}.

\subsection{Model Checking \texorpdfstring{\TATL}{TATL}}
\label{sec:sync:modch:tatl}

Algorithms and complexity results are given in \cite{TATL2006} for model checking \TATL over \TDCGS,
which are analogous to \DMAS.
The strategies considered are defined on histories of concrete states, \ie of type \IRT,
where $T$ refers  to the timed strategy.
Furthermore, analogously to \ATLs, the semantics of \TATL[\IrT] and \TATL[\IRT] coincide \cite{KnapikAPJP19}.
Model checking \TATL[\IrT] and \TATL[\IRT] is \complexity{EXPTIME}-complete in the general case \cite[Theorem 13]{TATL2006},
and \complexity{PTIME}-complete for the subset that excludes equality in time constraints \cite[Theorem 14]{TATL2006}.

It is important to note, though, that these results are given wrt. the formula size;
in particular, the exponential blowup in the general case is attributed
solely to the binary encoding of constraints in \TATL formulas,
while the algorithm is actually in \complexity{PTIME} wrt. the model size \cite[Theorem 12]{TATL2006}.
Hence, we put the latter in \Cref{tab:sync:complexity},
as it cannot be otherwise compared with our results for \lan{S(T)CTL}.

We are not aware of any works investigating \TATL under imperfect information thus far.
However, for \TATL[\ir] we can at least establish upper and lower bounds of \complexity{PSPACE} (since $\TATL \subset \STCTL$)
and \complexity{\Delta^P_2} (since $\ATL \supset \TATL$), respectively.
Furthermore, \TATL[\iR] model checking is undecidable since it is already the case for \ATL[\iR].

\subsection{Model Checking \texorpdfstring{\SCTL}{SCTL} with r-Strategies}
\label{sec:sync:modch:sctl:r}

Under the memoryless semantics of strategic ability,
the complexity of model checking \SCTL can be established analogously to the way it was done for \ATL \cite{Jamroga15specificationMAS}.
We begin with \SCTL[ir].

\begin{algorithm}[htb]
	\caption{\mcheck{\STCTL}{\stratstyle{ir,Ir}}{\model,\state,\coop{\A}\gamma}}
	\label{algo:sctl:ir}
	\begin{itemize}
	\item Guess a strategy $\strat_\A$.
	\item Prune $\model$ by removing transitions not consistent with $\strat_\A$.
	\item If $\coop{\A}\gamma$ is an \STCTL formula,
	      run \TCTL model checking on $\gamma$, else run \CTL model checking on $\gamma$.
	\end{itemize}
\end{algorithm}

\begin{theorem}
\label{theorem:sctl:ir}
Model checking \SCTL[\ir] is \complexity{\Delta^P_2}-complete.
\end{theorem}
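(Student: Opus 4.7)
The plan is to prove membership and hardness separately: membership by analysing \Cref{algo:sctl:ir} and arguing that it fits in $\complexity{\Delta^P_2} = \complexity{P}^\complexity{NP}$, and hardness by embedding \ATL[\ir] into \SCTL[\ir] and appealing to the known $\complexity{\Delta^P_2}$-hardness of \ATL[\ir] model checking~\cite{Jamroga15specificationMAS}.

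For membership, I would proceed bottom-up over the subformulas of the input formula $\varphi$, labelling each state of $\model$ with the set of subformulas it satisfies, in the style of standard \CTL model checking. Boolean combinations and untimed \CTL temporal cases are handled in deterministic polynomial time. For a strategic subformula $\coop{\A}\gamma$ whose proper strategic subformulas are already labelled (and thus treatable as fresh atomic propositions inside $\gamma$), the decision $\model,\state \models \coop{\A}\gamma$ is delegated to an NP oracle implementing \Cref{algo:sctl:ir}: guess a joint \ir strategy $\strat_\A$, whose total representation is polynomial since each $\strat_i : \Locations_i \to \Events_i$ has size $O(|\Locations_i|\log|\Events_i|)$; prune the transition relation of $\model$ to retain only transitions compatible with $\strat_\A$; and invoke the polynomial-time \CTL model-checking routine on $\gamma$ in the pruned model. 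Correctness of the pruning step is immediate from the fact that the paths of the pruned model coincide exactly with $\outcome_\model^\ir(\state,\strat_\A)$. Since the labelling issues at most polynomially many oracle calls, the overall procedure lies in $\complexity{P}^\complexity{NP}$.

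For hardness, I would exhibit a linear-time syntactic translation $\tau$ from \ATL[\ir] to \SCTL[\ir] that leaves the model unchanged and rewrites every strategic subformula $\coop{\A}\Next\psi$ as $\coop{\A}\Apath\Next\tau(\psi)$, every $\coop{\A}(\psi_1\Until\psi_2)$ as $\coop{\A}\Apath(\tau(\psi_1)\Until\tau(\psi_2))$, and analogously for $\Release$; Boolean and atomic constructs are preserved. A direct comparison of the \ATL[\ir] clauses for $\coop{\A}\Next, \coop{\A}\Until, \coop{\A}\Release$ with the \SCTL clauses of \Cref{def:stctlsem} under the \stratstyle{ir} outcome shows that $\tau$ preserves truth, so $\complexity{\Delta^P_2}$-hardness of \ATL[\ir] transfers to \SCTL[\ir]. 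The main subtlety to address is the composition of nested strategic operators with the guess-and-prune step: one must use the semantic clause $\model,\outcome_\model^\ir(\cdot,\strat_\A)\models\coop{\A'}\gamma' \Leftrightarrow \model,\cdot\models\coop{\A'}\gamma'$ to justify that the oracle call for an inner $\coop{\A'}\gamma'$ operates on the \emph{original} model rather than on the pruned one produced by the outer guess; once this separation is made explicit, the single pass through the labelling phase sees only polynomially many oracle queries, and the overall bound remains in $\complexity{\Delta^P_2}$.
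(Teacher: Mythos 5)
Your proposal matches the paper's proof in both directions: membership is established exactly as in the paper by the guess-and-prune procedure of \Cref{algo:sctl:ir} used as an \complexity{NP} oracle, with nested strategic subformulas eliminated bottom-up via polynomially many oracle calls, and hardness follows from the fact that \SCTL[\ir] subsumes \ATL[\ir], whose model checking is $\complexity{\Delta^P_2}$-complete by \cite[Th.~34]{Jamroga15specificationMAS}. Your explicit translation $\tau$ merely spells out the syntactic embedding the paper invokes implicitly, and your remark that inner $\coop{\A'}$ operators are evaluated on the original (unpruned) model is a correct reading of the semantics that the paper's bottom-up relabelling also relies on, so the two arguments are essentially identical.
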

\begin{proof}
Let $\varphi = \coop{\A}\gamma$ be an \SCTL[\ir] formula without nested strategic modalities.
Consider the procedure in \Cref{algo:sctl:ir}.
It runs in \complexity{NP} for input $\varphi$,
since $\strat_\A$ can be guessed in non-deterministic polynomial time,
while pruning transitions and model checking of \CTL formula $\gamma$
requires deterministic polynomial time.
Note that for an arbitrary \SCTL[\ir] formula,
nested strategic modalities can be replaced with fresh propositions
by calling \Cref{algo:sctl:ir} recursively, bottom up.
This requires polynomially many calls (wrt. the formula size)
to an \complexity{NP} oracle executing \Cref{algo:sctl:ir},
hence the upper bound of $\complexity{P}^{\complexity{NP}} = \complexity{\Delta^P_2}$.

The lower bound follows from the fact that \SCTL[\ir] subsumes \ATL[\ir] whose model checking complexity
is \complexity{\Delta^P_2} \cite[Th. 34]{Jamroga15specificationMAS}.
\end{proof}

For \SCTL[Ir], i.e., strategies with perfect information, a more involved construction is required to establish the lower bound.

\begin{theorem}
\label{theorem:sctl:Ir}
Model checking \SCTL[\Ir] is \complexity{\Delta^P_2}-complete.
\end{theorem}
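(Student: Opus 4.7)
The plan is to reuse the template of \Cref{theorem:sctl:ir} for the upper bound, and to design a tailored polynomial-time reduction from a \complexity{\Delta^P_2}-complete nested-SAT problem for the lower bound, since \ATL[\Ir] is already in \complexity{PTIME} and therefore cannot supply the hardness.

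For the upper bound, I would run \Cref{algo:sctl:ir} verbatim, but now guessing \Ir-strategies $\strat_i\colon\States\to\Events_i$ for each $i\in\A$. Such a strategy has description size polynomial in $|\model|$ and can be guessed nondeterministically in polynomial time. After pruning transitions inconsistent with $\strat_\A$, the body $\gamma$ becomes a plain \CTL formula whose model checking is in deterministic polynomial time. Nested strategic modalities are eliminated in a bottom-up pass, each innermost $\coop{\cdot}\gamma'$ being labelled by a fresh proposition via a single call to the \complexity{NP} oracle executing \Cref{algo:sctl:ir}. The overall cost is thus $\complexity{P}^{\complexity{NP}}=\complexity{\Delta^P_2}$.

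For the lower bound, my plan is a reduction from SNSAT$_k$, the standard \complexity{\Delta^P_2}-complete problem asking for the value $z_k$ of a chain $\psi_1(\bar x_1), \psi_2(\bar x_2, z_1),\ldots,\psi_k(\bar x_k, z_1,\ldots,z_{k-1})$ of Boolean formulas, where $z_i$ is the satisfiability bit of $\psi_i$. I would construct a \MAS in which, for each $i$, a proponent agent chooses a valuation of $\bar x_i$ by selecting actions along a linear \emph{commitment prefix}, after which an opponent is free to branch nondeterministically and pick a clause of $\psi_i$ to inspect. The inner \SCTL[\Ir] subformula $\coop{A_i}\gamma_i$ with $\gamma_i=\Apath\Sometm\,\prop{clauseOK_i}$ then holds iff some valuation of $\bar x_i$ satisfies \emph{every} clause, i.e., iff $\psi_i$ is satisfiable. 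Previously computed bits $z_1,\ldots,z_{i-1}$ are threaded into $\psi_i$ as fresh atomic propositions labelling the states where the corresponding inner queries $\coop{A_1}\gamma_1,\ldots,\coop{A_{i-1}}\gamma_{i-1}$ evaluate to true, exactly matching the bottom-up labelling used in the upper bound.

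The main obstacle is the standard sticking point for perfect-information hardness: since an \Ir-strategy depends on the global state, the proponent could in principle give different valuations of $\bar x_i$ along different branches selected by the opponent, breaking the intended bijection between \Ir-strategies and assignments. My remedy is to serialise the gadget so that all assignment choices are made on a \emph{single non-branching path} before the opponent can branch; along this path the global state uniquely identifies the prefix of the play, so distinct \Ir-strategies correspond to distinct assignments. The \CTL quantifier $\Apath$ inside $\gamma_i$ then does the work of traversing all clause-branches the opponent may pick -- which is precisely where the extra power over \ATL[\Ir] comes from -- closing the hardness argument and matching the \complexity{\Delta^P_2} upper bound.
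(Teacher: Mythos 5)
Your upper bound is exactly the paper's: guess the polynomial-size \Ir-strategy, prune, run \CTL model checking on the body, and eliminate nested modalities bottom-up with polynomially many oracle calls, giving $\complexity{P}^{\complexity{NP}}=\complexity{\Delta^P_2}$. The gap is in the lower bound. Your inner objective $\coop{A_i}\Apath\Sometm\prop{clauseOK_i}$ is semantically identical to the \ATL[\Ir] formula $\coop{A_i}\Sometm\prop{clauseOK_i}$ (the \ATL strategic modality already quantifies universally over all outcome paths), and nested \ATL[\Ir] formulas are model-checkable in \complexity{PTIME}. So if your reduction were correct, SNSAT would be in \complexity{PTIME} --- you correctly observe at the outset that \ATL[\Ir] cannot supply the hardness, but the gadget you then build never leaves that fragment. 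The concrete point of failure is the commitment prefix: on a \emph{non-branching} path the global state records nothing about which actions were chosen, so the only way the later clause inspection can depend on the committed valuation is to route the play back through the variable states $s_j$. But then whether the successor of $s_j$ should be labelled $\prop{clauseOK_i}$ depends both on the value chosen \emph{and} on which literal of which clause is currently under inspection, and a static state label cannot encode the latter unless you duplicate $s_j$ per context --- at which point a perfect-information memoryless strategy may assign each copy independently and the consistency you need is lost. (Recording the valuation in the states instead gives an exponential-size model.) This is precisely why \ATL[\Ir] is \complexity{PTIME}-complete while \ATL[\ir] is \complexity{NP}-hard: the hardness requires \emph{uniformity} of choices across indistinguishable copies, i.e., imperfect information. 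The feature of \SCTL that survives the passage to perfect information is not $\Apath\Sometm$ but the ability to combine $\Apath$- and $\Epath$-quantified objectives under a single strategy.

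The paper's lower bound accordingly reduces from \ATL[\ir] model checking (itself $\complexity{\Delta^P_2}$-complete) rather than from a SAT-style problem: it (i) clones states so that each records the last executed action profile, exposed via fresh propositions $\prop{exec_{i,\evt}}$; (ii) expresses the uniformity of agent $i$'s memoryless play as the \CTLK formula $\mathit{unif}_i \equiv \Apath\Always(\bigvee_{\evt\in\Events_i}K_i\Apath\Next\prop{exec_{i,\evt}})$ and rewrites each $\coop{A}\psi$ as $\coop{A}(\Apath\psi\land\bigwedge_{i\in A}\mathit{unif}_i)$; and (iii) compiles the knowledge operators away with ``epistemic ghost'' agents whose transitions simulate indistinguishability and whose abilities replace $K_i$. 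If you want to salvage a direct SNSAT-style reduction, you would have to re-import this uniformity machinery into your gadget; a bare reachability objective under \Ir cannot carry the hardness.
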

\begin{proof}
The upper bound follows exactly as for \SCTL[\ir] (Th.~\ref{theorem:sctl:ir}).

\short{%
The lower bound is obtained by a reduction from the model checking problem for \ATL[\ir],
which is \complexity{\Delta^P_2}-complete~\cite{Jamroga06atlir-eumas}.
We present an overview of the reduction below, and refer the reader to \cite{arXiv-SCTLIr}
for more technical details on the constructions used.

Let $\model$ be an untimed model, and $\varphi$ an \ATL[\ir] formula.
First, we reconstruct the model by cloning states in $M$ so that they record the latest
action profile that has been executed, as in~\cite[Section~3.4]{Goranko04comparingKRA}.
That is, for each state $q$ in $M$ and incoming transition labeled by $(\evt_1,\dots,\evt_n)$,
we create a new state $(q,\evt_1,\dots,\evt_n)$, and direct the transition to that state.
Moreover, we label the new state by fresh atomic propositions
$\prop{exec_{1,\evt_1}},\dots,\prop{exec_{n,\evt_n}}$
that can be used to capture the latest decision of each agent within the formulas of the logic.
We denote the resulting model by $M'$.

Then, we temporarily add epistemic operators $K_i$ to the language, with the standard
observational semantics (\ie{}, $K_i\varphi$ holds in state $s$ iff $\varphi$ is true
in all the states $s'$ such that $(s')^i=s^i$).
The uniformity of agent $i$'s  play can be now captured by the following \CTLK formula:
$\mathit{unif}_i \equiv \Apath\Always(\bigvee_{\evt\in\Events_i}K_i\Apath\Next\prop{exec_{i,\evt}})$.
We reconstruct formula $\varphi$ by replacing every occurrence
of $\coop{A}\psi$ with $\coop{A}(\Apath\psi \land \bigwedge_{i\in A}\mathit{unif}_i)$.
We denote the resulting \SCTLK formula by $\varphi'$.
It is easy to see that $M,s \models_{\ATL[\ir]} \varphi$ iff $M',s \models_{\SCTLK[\Ir]} \varphi'$.

Finally, we do a translation from \SCTLK to \SCTL by a straightforward adaptation
of the construction in~\cite[Section~4.2]{Jamroga08reduction}.
For each agent $i\in\A$, we add an ``epistemic ghost'' $e_i$ to the set of agents.
Then, we simulate the indistinguishability of states in $M'$ with transitions effected by the epistemic ghosts.
That is, we add transitions controlled by $e_i$ between each pair of states $s,s'$ with $s^i=(s')^i$.
We also replace the knowledge operators in $\varphi'$ by appropriate strategic
subformulas for $e_i$, see~\cite[Section~4.2]{Jamroga08reduction} for the details.
The resulting translations of $M'$ and $\varphi'$ are denoted by $M''$ and $\varphi''$.
Analogously to~\cite[Theorem~1]{Jamroga08reduction},
we get that $M',s \models_{\SCTLK[\Ir]} \varphi'$ iff $M'',s \models_{\SCTL[\Ir]} \varphi''$
(note that we need to extend the proof of~\cite[Theorem~1]{Jamroga08reduction}
to Boolean combinations of reachability/safety objectives, but that is also straightforward).
This completes the reduction.
}
\extended{%
The lower bound is obtained by a reduction from the model checking problem for \ATL[\ir],
which is \complexity{\Delta^P_2}-complete~\cite{Jamroga06atlir-eumas}.
Let $\model$ be an untimed model, and $\varphi$ an \ATL[\ir] formula.
The reduction proceeds via the following steps:
\begin{enumerate}
\item Reduction of model checking for \SCTL[\ir] to model checking for \SCTLK[\Ir], i.e.,
\SCTL with memoryless perfect information strategies, and extended with epistemic operators for observational knowledge.

\item Reduction of model checking for \SCTLK[\Ir] to model checking for \SCTL[\Ir]
by a model-specific translation of the epistemic operators to strategic formulas.
\end{enumerate}

\begin{figure}[t]\centering
\includegraphics[width=5cm]{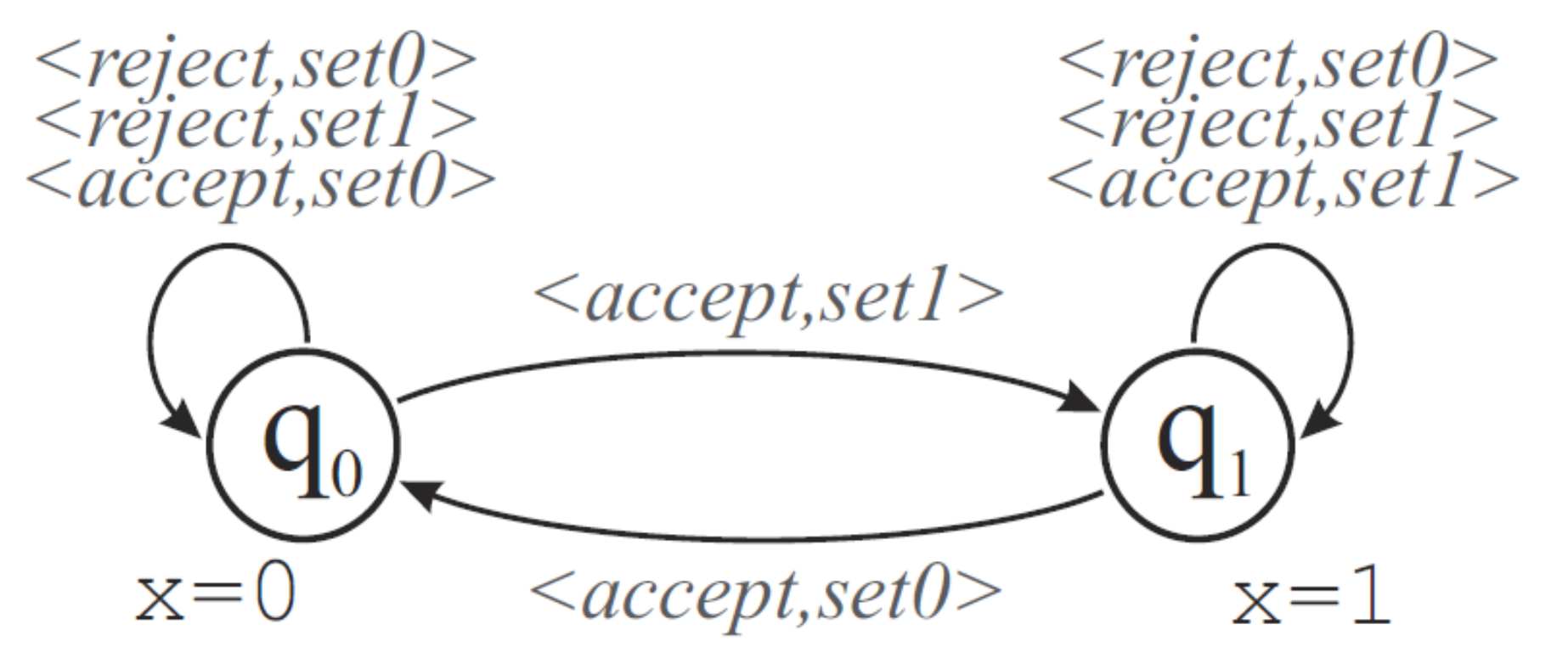}
\caption{A simple model of client/server interaction~\cite{Goranko04comparingKRA}}
\label{fig:cloning-before}
\end{figure}

\para{Model checking reduction from \SCTL[\ir] to \SCTLK[\Ir].}
The idea is to transform the model in such a way that the uniformity of agent $i$'s
choices in the current epistemic neighborhood can be expressed by a temporal-epistemic formula
$\mathit{unif}_i$. Then, the existence of a uniform memoryless strategy for coalition $A$
that achieves $\psi$ is captured by the \SCTLK formula
$\coop{A}(\Apath\psi \land \bigwedge_{i\in A}\mathit{unif}_i)$, with $\coop{A}$ interpreted over \Ir-strategies.
The reduction proceeds as follows.

\begin{figure}[t]\centering
\includegraphics[width=\columnwidth]{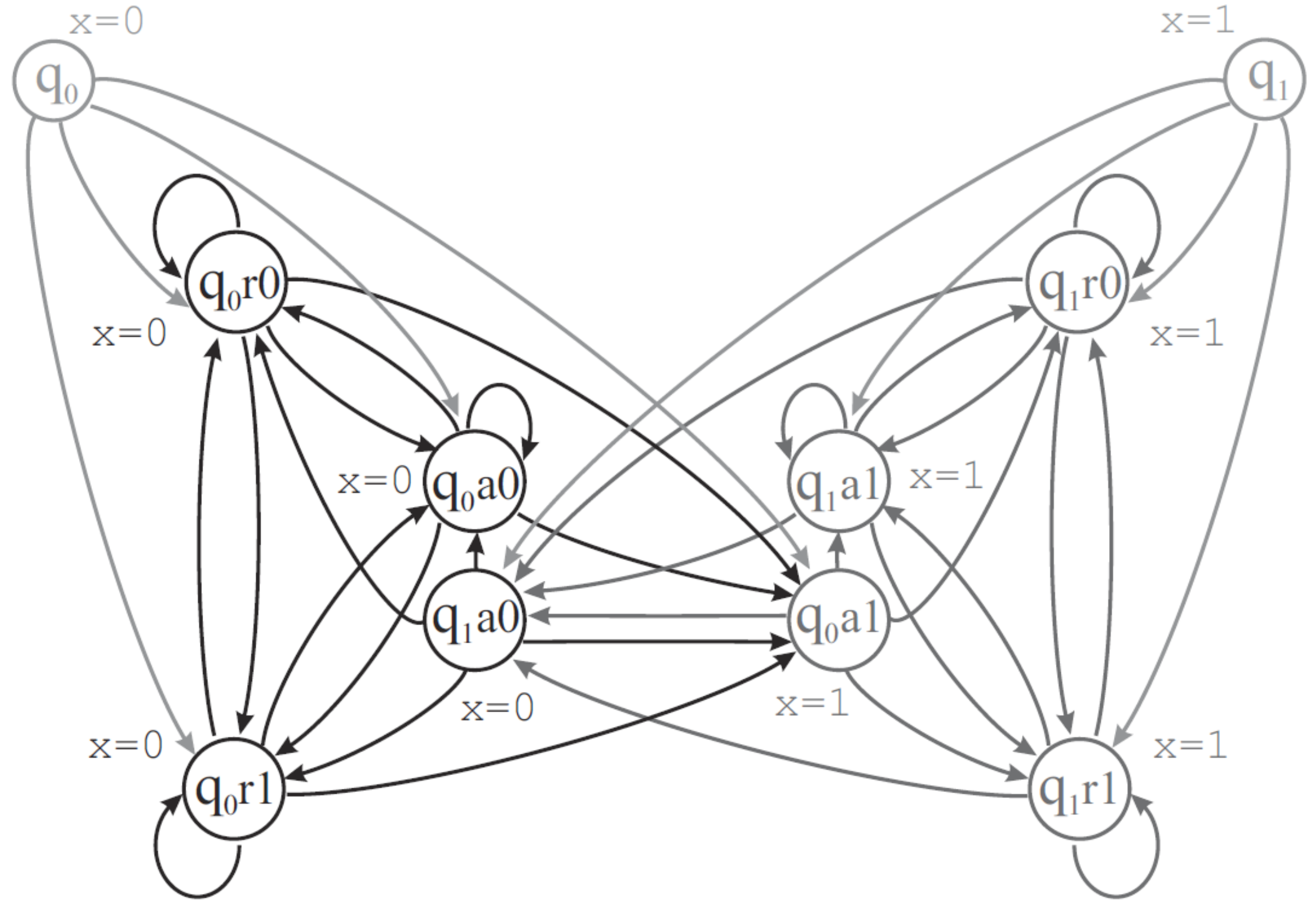}
\caption{The client/server model after cloning~\cite{Goranko04comparingKRA}.
We omit the transition labels and labelling with auxilliary propositions for the sake of readability}
\label{fig:cloning-after}
\end{figure}

\smallskip\noindent
\underline{Model reconstruction}.
First, we reconstruct the model $\model$ by cloning its states so that they record the latest
action profile that has been executed, as in~\cite[Section~3.4]{Goranko04comparingKRA}.
That is, for each state $q$ in $M$ and incoming transition labeled by $(\evt_1,\dots,\evt_n)$,
we create a new state $(q,\evt_1,\dots,\evt_n)$, and direct the transition to that state.
Moreover, we label the new state by fresh atomic propositions
$\prop{exec_{1,\evt_1}},\dots,\prop{exec_{n,\evt_n}}$
that can be used to capture the latest decision of each agent within the formulas of the logic.
We denote the resulting model by $tr_1(\model)$.
An example of the transformation is shown in Figures~\ref{fig:cloning-before} and~\ref{fig:cloning-after}.

\smallskip\noindent
\underline{Capturing uniformity}.
We add epistemic operators $K_i$ to the language of \SCTL, with the standard observational semantics.
That is, $K_i\varphi$ holds in state $s$ iff $\varphi$ holds in all the states $s'$ such that $(s')^i=s^i$.
The uniformity of agent $i$'s  play can be now captured by the following \CTLK formula:
$\mathit{unif}_i \equiv \Apath\Always(\bigvee_{\evt\in\Events_i}K_i\Apath\Next\prop{exec_{i,\evt}})$.
We reconstruct formula $\varphi$ by replacing every occurrence
of $\coop{A}\psi$ with $\coop{A}(\Apath\psi \land \bigwedge_{i\in A}\mathit{unif}_i)$.
We denote the resulting \SCTLK formula by $tr_1(\varphi)$.

\smallskip\noindent
\underline{Correctness of the reduction}.
It is easy to see that \\
\centerline{$\model,\state \satisf[{\ATL[\ir]}] \coop{A}\psi$\ iff\ $\model,\state \satisf[{\SCTLK[\Ir]}] \coop{A}(\Apath\psi \land \bigwedge_{i\in A}\mathit{unif}_i)$.}\\
By straightforward induction on the structure of $\phi$, we get that \\
\centerline{$\model,\state \satisf[{\ATL[\ir]}] \phi$\ if and only if\ $tr_1(\model),\state \satisf[{\SCTLK[\Ir]}] tr_1(\phi)$}\\
for every \ATL formula $\phi$.

\begin{figure}[t]\centering
\includegraphics[width=5cm]{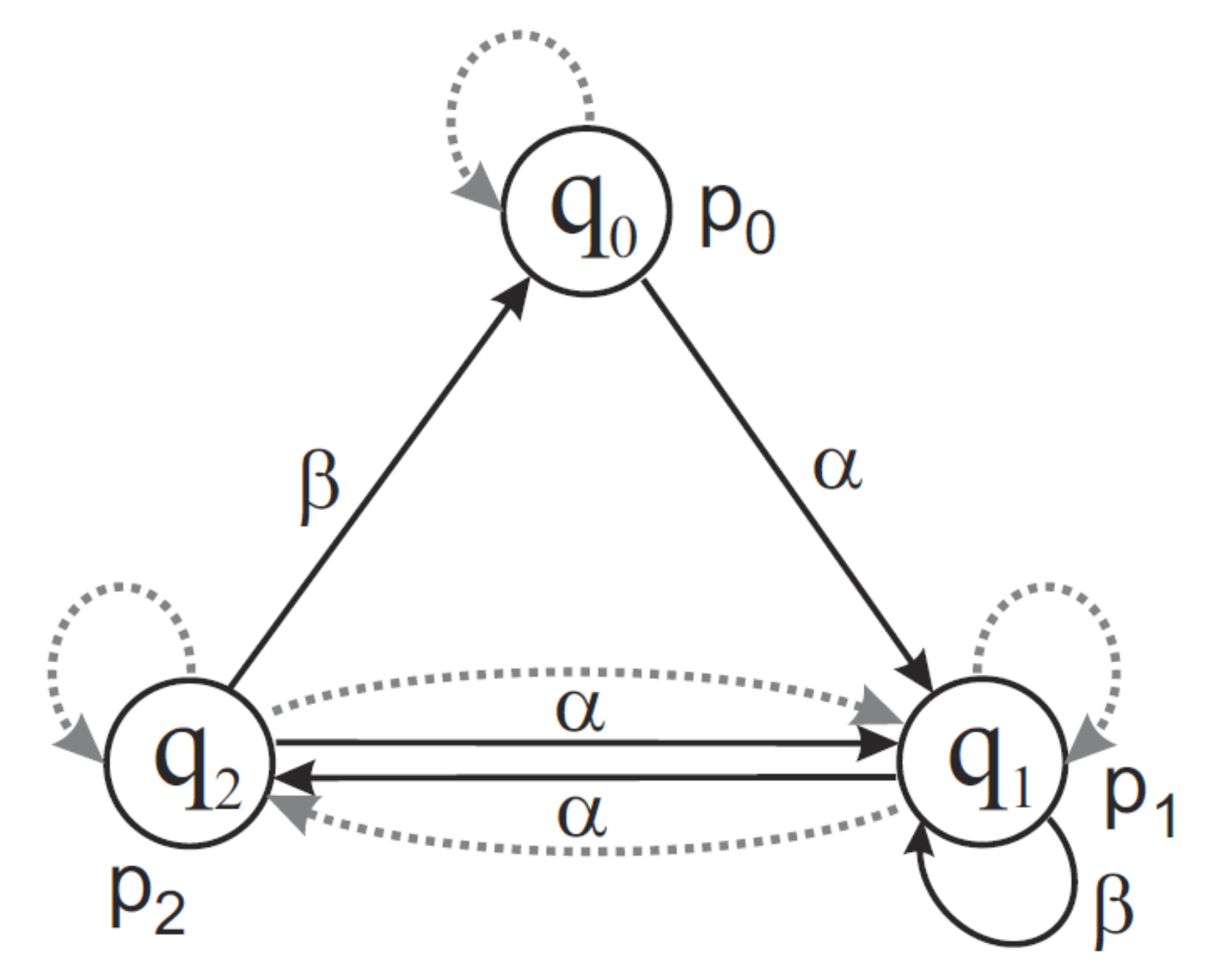}
\caption{Example single-agent game model from~\cite{Jamroga08reduction}}
\label{fig:epistreduct-before}
\end{figure}

\para{Reduction for epistemic operators.}
Finally, we do a translation from \SCTLK to \SCTL by a straightforward adaptation
of the construction proposed in~\cite[Section~3.4]{Goranko04comparingKRA} and refined in~\cite[Section~4.2]{Jamroga08reduction}.
For each agent $i\in\A$, we add an ``epistemic ghost'' $e_i$ to the set of agents.
Then, we simulate the indistinguishability of states in $M'$ with transitions effected by the epistemic ghosts.
That is, we add transitions controlled by $e_i$ between each pair of states $s,s'$ with $s^i=(s')^i$.
We also replace the knowledge operators in $\varphi'$ by appropriate strategic
subformulas for $e_i$, see~\cite[Section~4.2]{Jamroga08reduction} for the details.
The resulting translations of $M'$ and $\varphi'$ are denoted by $M''$ and $\varphi''$.

\begin{figure}[t]\centering
\includegraphics[width=\columnwidth]{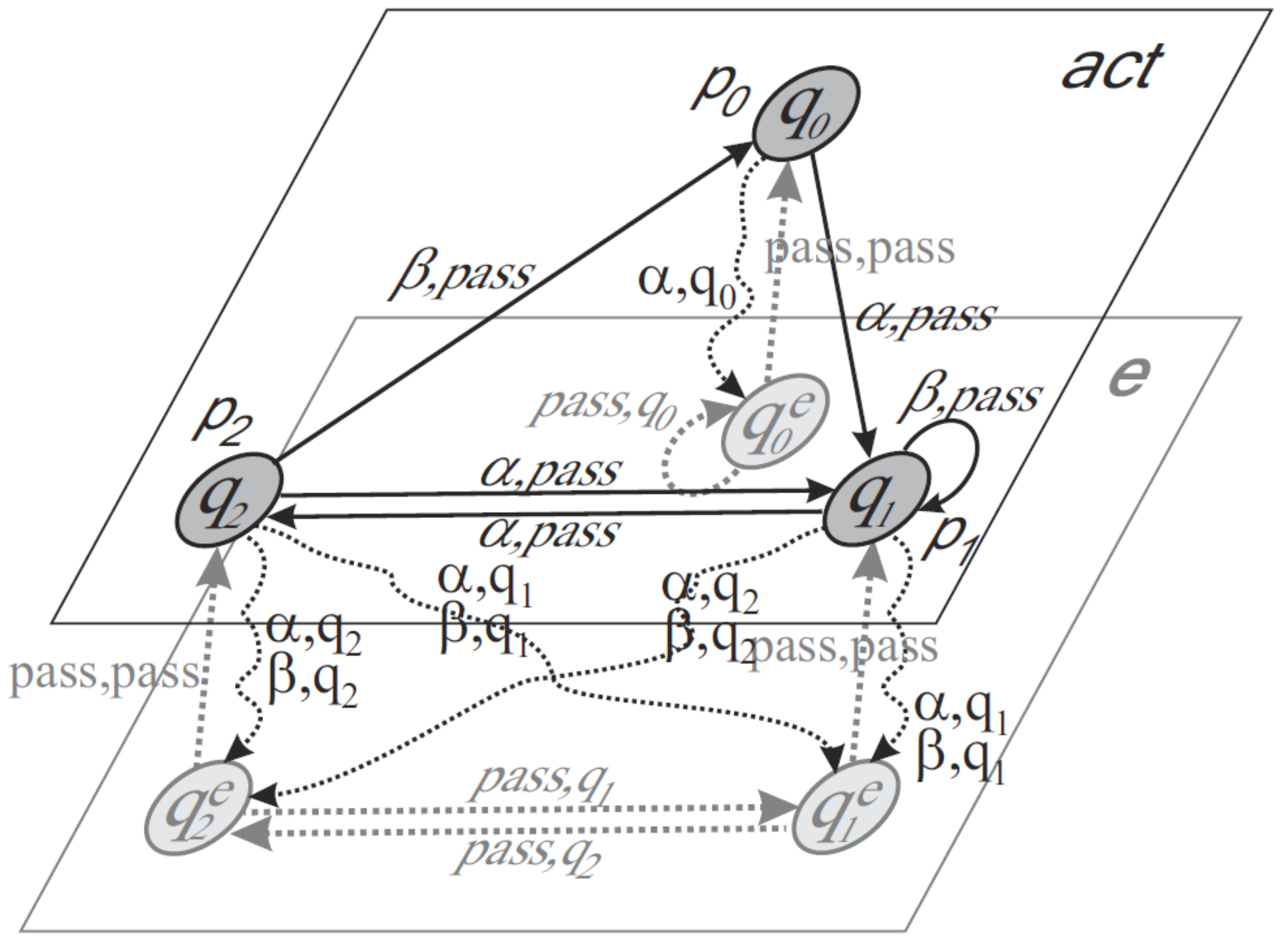}
\caption{Reconstruction of the model in Figure~\ref{fig:epistreduct-before} by adding the epistemic agent $e$ and ``epistemic copies'' of states $q_i^e$}
\label{fig:epistreduct-after}
\end{figure}

Analogously to~\cite[Theorem~1]{Jamroga08reduction},
we get that $M',s \models_{\SCTLK[\Ir]} \varphi'$ iff $M'',s \models_{\SCTL[\Ir]} \varphi''$
(note that we need to extend the proof of~\cite[Theorem~1]{Jamroga08reduction}
to Boolean combinations of reachability/safety objectives, but that is also straightforward).
This completes the reduction.
}
\end{proof}

\subsection{Model Checking \texorpdfstring{\SCTL}{SCTL} with R-Strategies}
\label{sec:sync:modch:sctl:R}

As with the corresponding variants of \ATL and logics that extend it,
we immediately obtain undecidability for \SCTL[\iR].

\begin{theorem}
\label{theorem:sctl:iR}
Model checking \SCTL[\iR] is undecidable.
\end{theorem}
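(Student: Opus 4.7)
The plan is to obtain undecidability by a trivial reduction from model checking $\ATL[\iR]$, which is known to be undecidable by the classical result of Dima and Tiplea~\cite{Dima11undecidable}. Since the hardness source is already available, the only real work is to verify that the reduction is indeed identity, that is, that any $\ATL[\iR]$ model checking instance is, in a literal sense, also an $\SCTL[\iR]$ instance.

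First, I would recall from Section~\ref{sec:syntax} that $\ATL$ is defined as the syntactic fragment of $\SCTL$ in which every occurrence of the strategic modality $\coop{\A}$ is immediately followed by $\Apath\Next$, $\Apath\Until$, or $\Apath\Release$ (with the universal path quantifier then suppressed in the surface syntax of $\ATL$). Hence, given any $\ATL$ formula $\varphi$, there is a straightforward rewriting $\widehat{\varphi}$ that replaces each $\coop{\A}\Next\psi$ by $\coop{\A}\Apath\Next\psi$, each $\coop{\A}\psi_1\Until\psi_2$ by $\coop{\A}\Apath\psi_1\Until\psi_2$, and analogously for $\Release$. The size of $\widehat{\varphi}$ is linear in the size of $\varphi$, and the transformation is computable in polynomial time.

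Next, I would argue that the semantics agree: for any untimed \MAS model $\model$ and state $\state$, $\model,\state \satisf[\ATL[\iR]] \varphi$ iff $\model,\state \satisf[\SCTL[\iR]] \widehat{\varphi}$. This is a direct induction on the structure of $\varphi$. The Boolean and atomic cases are immediate. For the strategic cases, one just unfolds both semantics: under $\iR$-strategies, $\coop{\A}\Next\psi$ in $\ATL$ asks for a joint perfect-recall imperfect-information strategy $\strat_\A$ such that every $\pi\in\outcome_\model^{\iR}(\state,\strat_\A)$ satisfies $\psi$ at its first successor; the clause of Definition~\ref{def:stctlsem} for $\coop{\A}\Apath\Next\psi$ requires exactly the same, and the $\Until/\Release$ cases are identical mutatis mutandis. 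Since the outcome sets are defined on global/local histories in the same way for both logics (Definition~\ref{def:outcome}), no subtlety arises from switching logics.

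Finally, I would conclude: if $\SCTL[\iR]$ model checking were decidable, we could decide $\ATL[\iR]$ model checking by first computing $\widehat{\varphi}$ and then running the $\SCTL[\iR]$ procedure on $(\model,\state,\widehat{\varphi})$, contradicting the undecidability established in~\cite{Dima11undecidable}. I do not anticipate any substantive obstacle: the whole argument is essentially a syntactic containment check plus invocation of an existing undecidability result, and the same reasoning carries over verbatim to $\STCTL[\iR]$ if one wishes to emphasise the timed analogue, since $\ATL$ embeds into $\STCTL$ by the same rewriting and the untimed semantics is recovered when all intervals are $[0,\infty)$ and the clock set is empty.
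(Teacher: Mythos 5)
Your proposal is correct and follows exactly the paper's argument: the paper also derives undecidability of \SCTL[\iR] model checking from the fact that \SCTL syntactically subsumes \ATL, whose \iR model checking is undecidable by Dima and Tiplea. You merely spell out the (straightforward) syntactic embedding and semantic agreement that the paper leaves implicit.
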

\begin{proof}
Follows from the fact \SCTL subsumes \ATL, whose model checking
is undecidable in the \iR semantics \cite[Theorem 1]{Dima11undecidable}.
\end{proof}

For \SCTL[\IR], we obtain \complexity{EXPTIME}-completeness via a reduction
of the \emph{module checking} problem \cite{KupfermanVW01} for \CTL as follows.

\begin{algorithm}[htb]
	\caption{\mcheck{\SCTL}{\stratstyle{IR}}{\model,\state,\coop{\A}\gamma}}
	\label{algo:sctl:IR}
	\begin{itemize}
	\item Split agents into two groups: coalition $\A$ and opponents $\Opp$.
	\item Merge $\A$ and $\Opp$ into single agents by creating auxiliary
	action labels for tuples of actions belonging to agents in $\A$ and $\Opp$.
	\item Run \CTL module checking on $\gamma$.
	\end{itemize}
\end{algorithm}

\begin{theorem}
\label{theorem:sctl:IR}
Model checking \SCTL[\IR] is \complexity{EXPTIME}-complete.
\end{theorem}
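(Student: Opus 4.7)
The plan is to establish both bounds using CTL module checking as the pivot problem, in line with Algorithm~\ref{algo:sctl:IR}. For the upper bound, I handle nested strategic modalities bottom-up: starting with an innermost $\coop{\A}\gamma$ where $\gamma$ contains no strategic operator, I fold the components of $\A$ into a single ``system'' player and the components of $\Opp = \Agents \setminus \A$ into a single ``environment'' player. Concretely, the joint actions of $\A$ (resp.~$\Opp$) become the action labels of the system (resp.~environment) in a two-player game structure; this is a polynomial transformation, since these joint action sets are already implicit in the model's global transition relation. On this structure, I decide whether the system has a perfect-recall strategy whose outcome tree satisfies $\gamma$, interpreted as a CTL formula over branching behaviour, which is precisely the CTL module checking problem of Kupferman, Vardi and Wolper~\cite{KupfermanVW01} and runs in \complexity{EXPTIME}. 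The truth value for $\coop{\A}\gamma$ is then recorded by a fresh atomic proposition labelling the states where the subformula holds, and the procedure recurses on the next enclosing modality. Since the number of strategic subformulas is at most linear in the input and each module checking call is \complexity{EXPTIME} in the model size, the overall algorithm stays in \complexity{EXPTIME}.

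The semantic lemma I would need to verify is that an $\IR$-outcome of $\coop{\A}\gamma$ coincides, up to relabelling, with the computation tree generated by fixing a system strategy against all environment behaviours in the constructed module, so that the $\Apath$ and $\Epath$ path quantifiers inside $\gamma$ range over the same set of paths under both semantics. This is immediate from Def.~\ref{def:outcome}: under $\IR$, fixing $\strat_\A$ leaves only $\Opp$'s non-determinism untouched, which is exactly what the environment of the module contributes; path quantifiers in $\gamma$ then coincide with the CTL path quantifiers evaluated on the pruned tree.

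For the lower bound, I reduce from CTL module checking, which is \complexity{EXPTIME}-hard already for propositional CTL~\cite{KupfermanVW01}. Given a module and a CTL formula $\varphi$, I construct a two-agent \MAS in which agent $1$ mimics the system's non-deterministic choices and agent $2$ mimics the environment's ability to enable transitions at environment states, and then ask whether $\model',\state \satisf[{\SCTL[\IR]}] \coop{\set{1}}\varphi$. The correctness of this reduction mirrors the semantic lemma used for the upper bound, with the roles of ``system'' and coalition exchanged.

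The main obstacle I anticipate is justifying the bottom-up labelling for nested strategic modalities: one must show that replacing an inner $\coop{\B}\gamma'$ by a fresh proposition $\prop{p}$ preserves the semantics of the outer formula. This hinges on the fact that, in the $\IR$ semantics as stated in the final clause of Def.~\ref{def:stctlsem}, the truth value of $\coop{\B}\gamma'$ at $\state$ is re-evaluated from scratch and depends only on $(\model,\state)$, not on the ambient strategy or the history that led to $\state$. Hence the labelling is well-defined, and an induction on the nesting depth of strategic modalities completes the argument.
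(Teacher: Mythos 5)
Your overall strategy is the one the paper itself uses: the upper bound follows \Cref{algo:sctl:IR} (merge the coalition and the opponents into two players, invoke \CTL module checking, and eliminate nested strategic modalities bottom-up with polynomially many \complexity{EXPTIME} oracle calls), and the lower bound comes from the \complexity{EXPTIME}-hardness of \CTL module checking \cite[Th.~3.1]{KupfermanVW01}, embedded as a two-agent instance with a single leading strategic operator. The two points you elaborate beyond the paper's sketch --- the correspondence between \IR-outcomes and pruned computation trees, and the soundness of the bottom-up relabelling (the truth of an inner $\coop{A'}\gamma'$ at a state depends only on the pointed model, not on the ambient strategy or history) --- are exactly what the paper leaves implicit, and both are correct.

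There is, however, one concrete slip in how you assign the roles. In Kupferman--Vardi--Wolper module checking it is the \emph{environment} whose choices are quantified over (it selects, with full memory of the history, which successors to enable), while the \emph{system}'s nondeterminism is what remains as branching in the tree on which the \CTL formula is evaluated. The correct correspondence with $\coop{\A}\gamma$ is therefore: the coalition $\A$ plays the environment (its strategy is what gets fixed and pruned), and the opponents $\Opp$ play the system (their choices are what the path quantifiers in $\gamma$ range over); the existential/universal mismatch is absorbed by complementation, which is harmless since \complexity{EXPTIME} is closed under complement. Your upper bound survives the swapped terminology because only membership in such a class is needed. But your lower-bound reduction, as literally written, quantifies over the wrong player: asking $\coop{\set{1}}\varphi$ where agent $1$ simulates the system's nondeterminism fixes precisely the choices that module checking requires to remain as branching, and leaves the environment free, so $\varphi$ is evaluated on the wrong family of trees. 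The coalition must contain the environment-simulating agent, e.g.\ $M\models_r\varphi$ iff $\model'\not\models\coop{\set{2}}\neg\varphi$ (together with an encoding of the environment's nonempty successor subsets as single actions). Your closing remark about ``exchanging the roles of system and coalition'' gestures at this but does not carry it out; once that exchange is made explicit, the reduction coincides with the paper's.
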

\begin{proof}
(Sketch)
The upper bound follows from the procedure in \Cref{algo:sctl:IR},
which runs in \complexity{EXPTIME},
since \CTL module checking is \complexity{EXPTIME}-complete.
If $\varphi$ has nested coalition operators,
they can be eliminated by proceeding recursively bottom up,
requiring polynomially many calls to \Cref{algo:sctl:IR} (wrt. formula size).
Thus the procedure still runs in \complexity{EXPTIME} for an arbitrary \SCTL[\IR] formula.

The lower bound follows from the fact \CTL module checking,
which is \complexity{EXPTIME}-complete \cite[Th. 3.1]{KupfermanVW01},
can be seen as a special case of \SCTL[\IR] model checking where we have
a single strategic operator at the beginning of a formula and only two agents.
\end{proof}

\subsection{Model Checking \texorpdfstring{\STCTL}{STCTL} with r-Strategies}
\label{sec:sync:modch:stctl:r}

For \STCTL with memoryless strategies, we have that model checking is \complexity{PSPACE}-complete,
\ie it remains unchanged from \TCTL.

\begin{theorem}
\label{theorem:stctl:ir}
Model checking \STCTL[\ir] is \complexity{PSPACE}-complete.
\end{theorem}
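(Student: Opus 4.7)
The plan is to follow the template of \Cref{theorem:sctl:ir} but substitute \TCTL model checking for \CTL model checking in the inner call, and then observe that the added cost of a \complexity{PSPACE} inner oracle absorbs the nondeterministic guessing step.

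For the upper bound, I would process the formula bottom-up on its strategic modalities. Let $\coop{\A}\gamma$ be an innermost occurrence in the input formula (so $\gamma$ is a \TCTL formula over the current atomic propositions, possibly enriched by fresh propositions that label states already shown to satisfy inner strategic subformulas). I would then run \Cref{algo:sctl:ir}: nondeterministically guess a memoryless uniform \ir-strategy $\strat_\A$, which is a function from local states to actions and thus has polynomial size in $|\model|$; prune $\model$ by deleting every joint transition whose $\A$-component is inconsistent with $\strat_\A$; and call \TCTL model checking on the pruned model for the subformula $\gamma$. \TCTL model checking is in \complexity{PSPACE} (Alur--Courcoubetis--Dill), and the guess and pruning fit in nondeterministic polynomial space, so the whole subroutine lies in \complexity{NPSPACE}, which collapses to \complexity{PSPACE} by Savitch. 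Labeling all states of $\model$ with a fresh proposition $\prop{p}_{\coop{\A}\gamma}$ for those where the subformula holds costs $|\States|$ many such PSPACE calls, so the recursion on the outer strategic modalities still fits in polynomial space overall.

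For the lower bound, I would simply invoke the fact that \TCTL is a syntactic fragment of \STCTL (obtained by removing the strategic modality, as listed in \Cref{sec:syntax}), so \STCTL[\ir] model checking subsumes \TCTL model checking, which is \complexity{PSPACE}-hard. On any instance of the latter, \STCTL[\ir] model checking returns the same answer, giving the matching lower bound.

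The main subtlety I expect is not in the complexity arithmetic but in arguing correctness of the pruning-plus-inner-call pattern for the timed setting. Specifically, one must check that when the pruned concrete model (the \CTS induced by the synchronous \TMAS with $\A$-moves restricted by $\strat_\A$) is fed to \TCTL model checking, the set of executions it generates coincides with $\outcome_\model^{\ir}((\state,v),\strat_\A)$, \ie that uniformity over local states is preserved and that delay transitions remain unaffected by pruning. This follows from the fact that $\strat_\A$ is defined on local states and that delay transitions $\xrightarrow{\delta}_c$ of \Cref{def:TMAS:semantics} depend only on invariants, not on action choices, so pruning action transitions leaves the timed branching structure otherwise intact. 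Once this equivalence is stated, the remainder is routine.
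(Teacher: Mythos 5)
Your proposal is correct and follows essentially the same route as the paper: reuse \Cref{algo:sctl:ir} with \TCTL model checking in place of \CTL, observe that guessing the memoryless strategy plus the \complexity{PSPACE} inner call gives $\complexity{NPSPACE}=\complexity{PSPACE}$, handle nested modalities by polynomially many oracle calls, and get the lower bound from \TCTL being a syntactic fragment of \STCTL. Your added remark on why pruning interacts correctly with delay transitions is a sensible elaboration the paper leaves implicit.
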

\begin{proof}
Let $\varphi = \coop{\A}\gamma$ be an \STCTL[\ir] formula without nested strategic modalities.
The upper bound follows analogously to the case of \SCTL[\ir] (cf.~\Cref{theorem:sctl:ir}).
\Cref{algo:sctl:ir} runs in $\complexity{NPSPACE} = \complexity{PSPACE}$ for input $\varphi$
(since rather than \CTL, \TCTL model checking, which is in \complexity{PSPACE}, is now called on $\gamma$).
For an arbitrary \STCTL[\ir] formula, eliminating nested modalities requires polynomially many calls to \Cref{algo:sctl:ir},
thus we obtain the upper bound of $\complexity{P}^{\complexity{PSPACE}} = \complexity{PSPACE}$.

The lower bound follows from the fact \STCTL subsumes \TCTL, whose model checking
is \complexity{PSPACE}-complete \cite{AlurCD93}.
\end{proof}

\begin{theorem}
\label{theorem:stctl:Ir}
Model checking \STCTL[\Ir] is \complexity{PSPACE}-complete.
\end{theorem}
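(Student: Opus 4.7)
(Sketch) The plan is to reuse both directions of the argument for \STCTL[\ir] from \Cref{theorem:stctl:ir}, with a small adjustment in the upper bound to accommodate perfect information.

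For the upper bound, let $\varphi = \coop{\A}\gamma$ be an \STCTL[\Ir] formula without nested strategic modalities. I would invoke \Cref{algo:sctl:ir} which already covers both \ir and \Ir strategies. The only difference is that the guessed joint strategy $\strat_\A$ is now a function from global states (rather than local states) to actions, but it still has size polynomial in $|\model|$, so it can be guessed non-deterministically in polynomial time. Pruning the transitions inconsistent with $\strat_\A$ is polynomial, and model checking the resulting \TCTL formula $\gamma$ on the pruned model is in \complexity{PSPACE} \cite{AlurCD93}. Thus the whole procedure runs in $\complexity{NPSPACE} = \complexity{PSPACE}$. Nested strategic modalities are handled as in the proof of \Cref{theorem:stctl:ir}: replace innermost occurrences by fresh propositions via recursive bottom-up calls, giving polynomially many calls (w.r.t.\ formula size) to the \complexity{PSPACE} oracle, which yields an overall bound of $\complexity{P}^{\complexity{PSPACE}} = \complexity{PSPACE}$.

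For the lower bound, I would simply observe that \STCTL[\Ir] subsumes \TCTL syntactically (any \TCTL formula contains no strategic modalities, so its semantics is independent of the strategy type), and \TCTL model checking is \complexity{PSPACE}-hard \cite{AlurCD93}. Hence \STCTL[\Ir] model checking is also \complexity{PSPACE}-hard.

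I do not anticipate a real obstacle here: the \Ir semantics only changes the domain of the strategy function, which stays of polynomial size, so the guess-and-check procedure of \Cref{algo:sctl:ir} still fits in \complexity{PSPACE}, and the \TCTL lower bound is sharp enough that no additional reduction (as was needed to lift the \ir lower bound to \Ir in \Cref{theorem:sctl:Ir}) is required.
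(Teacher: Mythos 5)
Your proposal is correct and matches the paper's approach: the paper's proof of this theorem is the one-line observation that both bounds follow exactly as for \STCTL[\ir] in \Cref{theorem:stctl:ir}, which is precisely what you argue, just spelled out in more detail. Your additional remark that the \TCTL lower bound is strategy-type-independent (so no analogue of the extra reduction from \Cref{theorem:sctl:Ir} is needed) is accurate and consistent with the paper.
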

\begin{proof}
Both bounds follow exactly as in Th.~\ref{theorem:stctl:ir} for \STCTL[\ir].
\end{proof}

\subsection{Model Checking \texorpdfstring{\STCTL}{STCTL} with R-strategies}
\label{sec:sync:modch:stctl:R}

Finally, under perfect recall model checking of \STCTL is undecidable,
which for \iR semantics directly follows from prior results,
and for \IR semantics is obtained via a reduction to \TCTL games \cite{FaellaTM02}.

\begin{theorem}
\label{theorem:stctl:iR}
Model checking \STCTL[\iR] is undecidable.
\end{theorem}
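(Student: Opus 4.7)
The plan is to deduce undecidability by a trivial syntactic-semantic embedding, analogous to the proof of \Cref{theorem:sctl:iR}. The chain of containments \ATL $\subseteq$ \SCTL $\subseteq$ \STCTL is immediate from the fragment description in \Cref{sec:syntax}: restricting every interval $I$ in $\Until_I,\Release_I$ to $[0,\infty)$ collapses \STCTL to \SCTL, and further requiring that every $\coop{A}$ be followed by $\forall\Next$, $\forall\Until$, or $\forall\Release$ (and erasing the $\forall$) produces \ATL. So every \ATL[\iR] formula is syntactically an \STCTL[\iR] formula.

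Next, I need the model side. By \Cref{def:MAS:def}, an untimed \MAS is precisely a \TMAS with $\Clocks_i=\emptyset$ for all $i$. As noted after \Cref{def:TMAS:semantics}, when there are no clocks the concrete \CTS contains only action transitions and is isomorphic to the underlying \MAS. Consequently, executions of the \CTS coincide with executions of the untimed model, and the $\iR$-outcome from \Cref{def:outcome} (whose history for $\strattype=\iR$ reduces to $s_0^i,s_1^i,\dots,s_m^i$ in the untimed case) agrees exactly with the standard \ATL[\iR] outcome. A straightforward induction on formula structure then shows that for every \ATL formula $\varphi$ and untimed \MAS $\model$, $\model,s\models_{\ATL[\iR]}\varphi$ iff the same $\varphi$, interpreted as an \STCTL formula (with no time intervals), satisfies $\model,(s,\emptyset)\models_{\STCTL[\iR]}\varphi$.

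Hence a decision procedure for \STCTL[\iR] model checking would yield a decision procedure for \ATL[\iR] model checking of the same input, contradicting \cite[Theorem 1]{Dima11undecidable}. This gives the undecidability of \STCTL[\iR] model checking.

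There is essentially no obstacle: the argument is a pure reduction from the already-undecidable \ATL[\iR] case, and the only point worth checking carefully is that the clockless embedding of an untimed \MAS into a \TMAS preserves both the execution structure and the $\iR$-strategy semantics, which follows directly from the definitions given in \Cref{sec:logic,sec:semantics}.
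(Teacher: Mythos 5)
Your proposal is correct and follows exactly the paper's own argument: the paper likewise derives undecidability from the fact that \STCTL subsumes \ATL, whose model checking under \iR strategies is undecidable by \cite[Theorem 1]{Dima11undecidable}. Your additional verification that the clockless embedding of untimed \MAS into \TMAS preserves executions and \iR-outcomes is a useful elaboration of a step the paper leaves implicit, but it is the same reduction.
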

\begin{proof}
Follows from the fact \STCTL subsumes \ATL, whose model checking is undecidable
for \iR strategies \cite[Theorem 1]{Dima11undecidable}.
\end{proof}

\begin{theorem}
\label{theorem:stctl:IR}
Model checking \STCTL[\IR] is undecidable.
\end{theorem}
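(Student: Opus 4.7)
The plan is to reduce from the $\TCTL$ control (game) problem over timed automata, which is shown undecidable in~\cite{FaellaTM02}, to model checking $\STCTL[\IR]$. The essential observation is that under perfect information and perfect recall, a joint strategy $\strat_\A$ coincides with a controller in a two-player timed game: it maps histories of global concrete states to actions, and its outcome collects every execution compatible with unconstrained opponent behaviour, which is exactly the semantics adopted by $\TCTL$ control synthesis.

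First I would take an instance of the $\TCTL$ control problem, i.e., a timed game automaton $\mathcal{G}$ partitioned into controllable and uncontrollable transitions, together with a $\TCTL$ objective $\phi$, and construct a two-agent $\TMAS$ $\model_\mathcal{G}$. Agent $1$ (the controller) owns the controllable actions of $\mathcal{G}$, while agent $2$ (the environment) owns the uncontrollable ones; clocks, invariants, guards and resets of $\mathcal{G}$ are copied into the local components of the two agents, and joint actions are defined so that at every location exactly one of the agents is effectively choosing (the other being limited to a neutral idle action enabled by its protocol). This yields a model whose concrete $\CTS$ is, up to a bijection on executions, the underlying transition structure of $\mathcal{G}$.

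Next I would translate $\phi$ into an $\STCTL[\IR]$ formula $\coop{\set{1}}\gamma_\phi$, where $\gamma_\phi$ is obtained from $\phi$ by keeping all $\Apath$, $\Epath$, $\Next$, $\Until_I$ and $\Release_I$ operators intact (this is legal syntactically, since $\gamma$ in the grammar of $\STCTL$ already admits arbitrary nested $\Apath/\Epath$ temporal formulas inside one strategic quantifier). The correctness argument then runs by induction on the structure of $\phi$: because strategies are of type $\IR$, agent $1$'s choice in $\model_\mathcal{G}$ can depend on the full history of global states (hence on any memory the controller might use), and the set $\outcome_{\model_\mathcal{G}}^{\IR}((\state,v),\strat_{\set{1}})$ is in one-to-one correspondence with the set of plays consistent with the controller $\strat_{\set{1}}$ in $\mathcal{G}$; the inner $\Apath/\Epath$ clauses of Def.~\ref{def:stctlsem} then quantify exactly as in $\TCTL$ on those plays. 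Hence $\mathcal{G}$ admits a controller winning $\phi$ iff $\model_\mathcal{G},\cstate_\iota \models \coop{\set{1}}\gamma_\phi$, and undecidability transfers.

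The main obstacle I expect is ensuring that the $\TMAS$ encoding faithfully simulates the partition between controllable and uncontrollable moves of $\mathcal{G}$, particularly with respect to time elapse: delay transitions $\xrightarrow{\delta}_c$ in Def.~\ref{def:TMAS:semantics} are not governed by any agent's protocol, so one must arrange invariants and guards so that the controller's ability to force or prevent an action exactly mirrors its ability to do so in $\mathcal{G}$ (in particular, preventing the environment from stalling the play by forcing indefinite delay, or from pre-empting controllable actions). A careful use of local invariants together with dedicated urgent actions for agent $1$ suffices; once this encoding is in place the remaining inductive translation of $\phi$ is routine, and any of the standard undecidable $\TCTL$ control fragments in~\cite{FaellaTM02} yields the claim.
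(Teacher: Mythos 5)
Your proposal is correct and takes essentially the same route as the paper's own (sketched) proof: both reduce from the undecidable \TCTL control problem of Faella, La Torre and Murano by viewing the timed game as a two-agent model (controller vs.\ environment) and the objective as a single leading strategic operator $\coop{\set{1}}\gamma_\phi$, so that \IR-strategies coincide exactly with controllers. Your extra attention to the delay-transition and controllable/uncontrollable partition issues is a reasonable elaboration of details the paper's sketch leaves implicit.
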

\begin{proof}
(Sketch)
Undecidability follows from the fact that \TCTL games \cite{FaellaTM02} can be seen as
a special case of \STCTL[IR] model checking,
with a single strategic operator at the beginning of the formula,
and two agents (obtained from grouping together all coalition agents and all opponents as in \Cref{algo:sctl:IR}).
Since TCTL games are undecidable for unrestricted \TCTL \cite[Theorem 3]{FaellaTM02},
clearly this is also the case for the more general case with nested strategic modalities.
\end{proof}

\begin{table}
\renewcommand{\arraystretch}{2}
\begin{tabular}{ |c|c|c|c|c| }
 \hline
 & \Ir & \IR & \ir & \iR \\
 \hline
	\ATLu &
	\multicolumn{2}{c|}{\makecell{\complexity{PTIME} \cite[Th. 5.2]{Alur02ATL}\\(\Ir, \IR semantics coincide)}} &
	\makecell{\complexity{\Delta^P_2} \cite[Th. 34]{Jamroga15specificationMAS}} &
	\multirow{6}{*}{\rotatebox{270}{\complexity{Undecidable} \cite[Th. 1]{Dima11undecidable} ($\supseteq \ATL[\iR]$)}} \\
 \cline{1-4}
	\SCTLu &
	\makecell{\complexity{\Delta^P_2} $[$Th.~\ref{theorem:sctl:Ir}$]$} &
	\makecell{\complexity{EXPTIME}\\$[$Th.~\ref{theorem:sctl:IR}$]$} &
	\makecell{\complexity{\Delta^P_2} $[$Th.~\ref{theorem:sctl:ir}$]$} & \\
 \cline{1-4}
	\ATLsu &
	\makecell{\complexity{PSPACE}\\\cite[Th. 23]{Jamroga15specificationMAS}} &
	\makecell{\complexity{2EXPTIME}\\\cite[Th. 5.6]{Alur02ATL}} &
	\makecell{\complexity{PSPACE}\\\cite[Th. 38]{Jamroga15specificationMAS}} & \\
 \cline{1-4}
	\TATLd &
	\multicolumn{2}{c|}{\makecell{\complexity{PTIME}\tablefootnote{
		\complexity{PTIME} wrt. model size, \complexity{EXPTIME} wrt. formula length (cf.~\Cref{sec:sync:modch:tatl}).}
		$[$\Cref{sec:sync:modch:tatl}$]$\\(\Ir, \IR semantics coincide)}} &
	\makecell{\complexity{\Delta^P_2}---\complexity{PSPACE}} & \\
 \cline{1-4}
	\STCTLc &
	\makecell{\complexity{PSPACE}\\$[$Th.~\ref{theorem:stctl:Ir}$]$} &
	\makecell{\complexity{Undecidable}\\$[$Th.~\ref{theorem:stctl:IR}$]$} &
	\makecell{\complexity{PSPACE}\\$[$Th.~\ref{theorem:stctl:ir}$]$} & \\
 \cline{1-4}
	\TATLsc &
	\multicolumn{3}{c|}{\makecell{\complexity{Undecidable} \cite[Th. 4.3]{Bouyer09} ($\supset \MTL$)}} & \\
 \hline
\end{tabular}
\caption{Model checking complexity wrt. the model size,
for \STCTL, its subsets, and $\ATLsu$, $\TATLsc$ for comparison.
Undecidability (\ATLu[\iR], \MTL) propagates to more expressive logics.}
\label{tab:sync:complexity}
\end{table}

\section{Expressivity Results}
\label{sec:expressivity}

We can see in Section~\ref{sec:modch} that using the broader syntax of \STCTL (\SCTL),
rather than \TATL (\ATL, resp.), does not significantly worsen the complexity of model checking,
especially for the imperfect information semantics.
In this section, we show that, in addition, it strictly increases the expressivity of the logic.
We start by recalling the formal definitions of expressive and distinguishing power.

\begin{definition}[Expressive power and distinguishing power~\cite{Wang09expressive}]
Consider two logical systems ${L}_1$ and ${L}_2$, with their semantics defined
over the same class of models $\mathcal{M}$.
${L}_1$ is \emph{at least as expressive as ${L}_2$} (written ${L}_2 \lexpr {L}_1$) if,
for every formula $\varphi_2$ of ${L}_2$, there exists a formula $\varphi_1$ of ${L}_1$,
such that $\varphi_1$ and $\varphi_2$ are satisfied in the same models from $\mathcal{M}$.

Moreover, ${L}_1$ is \emph{at least as distinguishing as ${L}_2$} (${L}_2 \ldist {L}_1$)
if every pair of models $M,M'\in\mathcal{M}$ that can be
distinguished by a formula of ${L}_2$ can also be distinguished by some formula of ${L}_1$.
\end{definition}
It is easy to see that ${L}_2 \lexpr {L}_1$ implies ${L}_2 \ldist {L}_1$.
By transposition, we also have that ${L}_2 \not\ldist {L}_1$ implies ${L}_2 \not\lexpr {L}_1$.
The following is straightforward.

\begin{proposition}
For any strategy type $\mathcal{S}$ and model type $\mathcal{M}$, we have that
$\TATL_\mathcal{S}^\mathcal{M} \lexpr \STCTL_\mathcal{S}^\mathcal{M}$
(and thus $\TATL_\mathcal{S}^\mathcal{M} \ldist \STCTL_\mathcal{S}^\mathcal{M}$).
\end{proposition}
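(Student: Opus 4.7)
The plan is to exhibit the obvious translation $tr$ from $\TATL$ formulas to $\STCTL$ formulas and show, by induction on the formula structure, that $\model,s \satisf[{\TATL_\mathcal{S}^\mathcal{M}}] \varphi$ iff $\model,s \satisf[{\STCTL_\mathcal{S}^\mathcal{M}}] tr(\varphi)$ for every $\TATL$ formula $\varphi$, every model $\model$ of type $\mathcal{M}$, and every state $s$. This is essentially suggested by the syntactic definition of $\TATL$ as a fragment of $\STCTL$ in \Cref{sec:syntax}: $tr$ is the identity on atomic propositions and commutes with Boolean connectives, while on each strategic subformula it reinserts the universal path quantifier that was syntactically stripped, i.e., $tr(\coop{A}\Next\psi) = \coop{A}\Apath\Next\,tr(\psi)$, $tr(\coop{A}\psi_1 \Until_I \psi_2) = \coop{A}\Apath\,tr(\psi_1)\Until_I tr(\psi_2)$, and analogously for $\Release_I$.

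The induction is mostly routine; the atomic and Boolean cases are immediate. The interesting cases are the strategic ones, and these reduce to a direct side-by-side comparison of \Cref{def:tatlsem} with \Cref{def:stctlsem}. For instance, the $\TATL$ clause for $\coop{A}\psi_1\Until_I\psi_2$ quantifies existentially over a joint $\mathcal{S}$-strategy $\strat_A$ and then universally over executions $\pi\in\outcome_\model^\mathcal{S}(s,\strat_A)$, requiring the $\Until_I$ property to hold along each such execution. This is precisely the unfolding of the $\STCTL$ clauses for $\coop{A}\Apath\,tr(\psi_1)\Until_I tr(\psi_2)$, because in \Cref{def:stctlsem} the $\coop{A}$ operator fixes a strategy whose outcome is then evaluated by $\forall$ (universally over the same set of executions) combined with the same temporal clause for $\Until_I$. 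The $\Next$ and $\Release_I$ cases are identical in structure.

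No real obstacle is anticipated; the one bookkeeping issue to watch is aligning the time-indexing used in the two definitions ($d_i$ in the $\DTS$ semantics versus $time_\pi(\state_i,v_i)$ in the $\CTS$ semantics, and the plain index for untimed models), but the semantic clauses for each $\mathcal{M}\in\set{\cont,\disc,\untm}$ have already been written so that the interpretation of the temporal operators following $\Apath$ coincides with the one baked into the corresponding $\TATL$ clauses. Once the induction is carried out, the second claim $\TATL_\mathcal{S}^\mathcal{M} \ldist \STCTL_\mathcal{S}^\mathcal{M}$ is immediate from the general observation, noted just before the proposition, that $\lexpr$ implies $\ldist$.
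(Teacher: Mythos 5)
Your proposal is correct and takes essentially the same route as the paper, whose entire proof is the one-liner that $\TATL$ is a syntactic restriction of $\STCTL$; your translation $tr$ that reinserts the stripped $\forall$ and the induction matching \Cref{def:tatlsem} against \Cref{def:stctlsem} is just the detailed unpacking of that observation. No divergence in method, only in the level of detail.
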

\begin{proof}
Follows as \TATL is a syntactic restriction of \STCTL.
\end{proof}

\begin{figure}[t]
  \centering
  \begin{minipage}{.27\textwidth}
    \begin{center}
    \scalebox{0.67}{
      \begin{tikzpicture}[>=stealth, node distance=1.5cm]
        \node[state, label=right:$\neg\prop{p}$] (q2) {$q_0$};
        \node[state, below left = of q2, label=right:$\neg\prop{p}$] (q3) {$q_1$};
        \node[state, below right = of q2, label=left:$\neg\prop{p}$] (q5) {$q_3$};
        \node[state, below right = of q3, label=right:$\prop{p}$] (q4) {$q_2$};

        \path[->] (q2) edge node[left] {$\alpha, \beta_{1}$} (q3);
        \path[->] (q3) edge node[left] {$\alpha_{2}, \beta$} (q4);
        \path[->] (q2) edge node[right] {$\alpha, \beta_{2}$} (q5);
        \path[->] (q5) edge node[right] {$\alpha_{2}, \beta$} (q4);

        \path[->] (q3) edge[loop left] node {$\alpha_{1}, \beta$} ();
        \path[->] (q5) edge[loop right] node {$\alpha_{1}, \beta$} ();
        \path[->] (q4) edge[loop below] node {$\alpha, \beta$} ();
      \end{tikzpicture}
    }
    \end{center}
  \end{minipage}
  \begin{minipage}{.2\textwidth}
    \begin{center}
      \scalebox{0.67}{
        \begin{tikzpicture}[>=stealth, node distance=1.5cm]
          \node[state, label=right:$\neg\prop{p}$] (q2) {$q'_0$};
          \node[state, below of = q2, label=right:$\neg\prop{p}$] (q3) {$q'_1$};
          \node[state, below of = q3, label=right:$\prop{p}$] (q4) {$q'_2$};

          \path[->] (q2) edge node[left] {\begin{tabular}{c}$\alpha, \beta_{1}$\\$\alpha, \beta_{2}$\end{tabular}} (q3);
          \path[->] (q3) edge node[left] {$\alpha_{2}, \beta$} (q4);

          \path[->] (q3) edge[loop left] node {$\alpha_{1}, \beta$} ();
          \path[->] (q4) edge[loop below] node {$\alpha, \beta$} ();
        \end{tikzpicture}
      }
    \end{center}
  \end{minipage}

\caption{Agent template $a$ (left) and agent $a'$ (right)}
\label{fig:expressivity}
\end{figure}
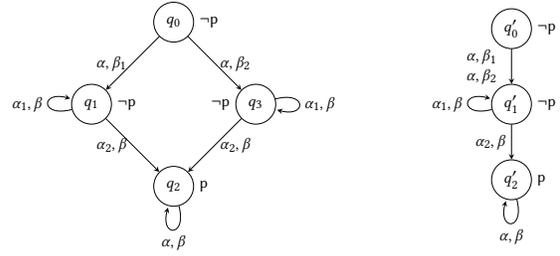

\begin{proposition}
For any strategy type $\mathcal{S}$ and model type $\mathcal{M}$, we have that
$\STCTL_\mathcal{S}^\mathcal{M} \not\ldist \TATL_\mathcal{S}^\mathcal{M}$
(and thus $\STCTL_\mathcal{S}^\mathcal{M} \not\lexpr \TATL_\mathcal{S}^\mathcal{M}$).
\end{proposition}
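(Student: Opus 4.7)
The plan is to exhibit a pair of systems built from the templates $a, a'$ of Figure~\ref{fig:expressivity}, together with a single $\STCTL_{\mathcal{S}}^{\mathcal{M}}$ formula that separates them, and to argue that no $\TATL_{\mathcal{S}}^{\mathcal{M}}$ formula can do so. I will treat both systems as clockless \TMAS so that the discrete, continuous, and untimed readings coincide on them; the only atomic proposition is $p$, true exactly at $q_2$ and $q'_2$; agent~$1$ owns the $\alpha$-actions and agent~$2$ owns the $\beta$-actions.

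The distinguishing formula I propose is
\[
 \varphi \;\equiv\; \coop{\{1\}}\bigl(\Epath\Sometm p \;\land\; \Epath\Always\neg p\bigr).
\]
In $M$ at $q_0$ I take the memoryless strategy of agent~$1$ that plays $\alpha$ at $q_0$, $\alpha_2$ at $q_1$, and $\alpha_1$ at $q_3$: its outcome contains both $q_0 q_1 q_2 q_2 \ldots$, witnessing $\Epath\Sometm p$, and $q_0 q_3 q_3 \ldots$, witnessing $\Epath\Always\neg p$, so $M,q_0 \models \varphi$. In $M'$ at $q'_0$ every strategy of agent~$1$ must commit to a single action at the unique intermediate local state $q'_1$: $\alpha_2$ invalidates $\Epath\Always\neg p$ and $\alpha_1$ invalidates $\Epath\Sometm p$. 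Perfect recall gives no extra power because $q'_0 q'_1$ is the only history ever reaching $q'_1$; hence the argument transfers to every $\mathcal{S} \in \{\ir,\Ir,\iR,\IR\}$ verbatim.

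For the converse I will build an alternating bisimulation~\cite{Alur02ATL} pairing $q_0 \sim q'_0$, $q_1 \sim q'_1$, $q_3 \sim q'_1$, and $q_2 \sim q'_2$. For each coalition $A \subseteq \{1,2\}$, joint actions are matched by the identity on labels: at $q_0$ the moves $(\alpha,\beta_1)$ and $(\alpha,\beta_2)$ lead to $q_1$ and $q_3$ respectively, both related to the unique successor $q'_1$ of $q'_0$, and the remaining pairs carry a single, clearly compatible joint action. The standard preservation theorem then forces $M,q_0$ and $M',q'_0$ to agree on every $\TATL_{\mathcal{S}}^{\mathcal{M}}$ formula, and since neither system uses clocks the intervals $I$ on the temporal modalities contribute nothing beyond the untimed case.

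The hard part will be the bisimulation step, as the very asymmetry that $\varphi$ exploits---two distinct local states $q_1 \ne q_3$ in $M$ against a single local state $q'_1$ in $M'$---must be argued invisible to $\TATL_{\mathcal{S}}^{\mathcal{M}}$, which forces every strategy to carry a single outer temporal modality and therefore cannot split its outcome into two sub-objectives with different path quantifiers. Once this is in place, $\STCTL_{\mathcal{S}}^{\mathcal{M}} \not\ldist \TATL_{\mathcal{S}}^{\mathcal{M}}$ is immediate, and the corresponding failure of expressive inclusion follows by contraposition of the implication recorded just before the proposition.
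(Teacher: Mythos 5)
Your proposal follows essentially the same route as the paper: the same two systems built from the templates $a$ and $a'$, the same separating formula $\coop{1}(\Epath\Sometm\prop{p}\land\Epath\Always\neg\prop{p})$, and the same appeal to alternating bisimulation to show that no \TATL formula can tell the models apart. Two points need repair, though. First, your justification that perfect recall adds nothing in $M'$ is wrong as stated: because of the self-loop $(\alpha_1,\beta)$ at $q'_1$, the histories $q'_0q'_1$, $q'_0q'_1q'_1$, \dots\ all end in $q'_1$, so an \IR-strategy \emph{can} switch from $\alpha_1$ to $\alpha_2$ after some visits; the correct argument is that the outcome of any agent-$1$ strategy in $M'$ is a single state-sequence, and one path cannot witness both $\Epath\Sometm\prop{p}$ and $\Epath\Always\neg\prop{p}$. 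Second, your claim that clocklessness makes the discrete, continuous, and untimed readings coincide does not cover $\mathcal{M}=\disc$: a \DMAS carries positive integer durations rather than clocks, so that case needs the separate adaptation the paper gives (assign each transition duration $1$ and note that the concrete models are tree-unfoldings, hence still alternating-bisimilar to $M,M'$).
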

\begin{proof}
The proof is inspired by the proof of~\cite[Proposition~4]{Agotnes07irrevocable}.
Let us construct two multi-agent systems $S,S'$, each with $\Agents=\set{1,2}$.
Both agents in $S$ are based on the agent template $a$, depicted in Figure~\ref{fig:expressivity}~(left),
with the empty sets of clocks.
Note that the model $M$ of the system is isomorphic with the agent template, and the concrete model is identical with the model.

Similarly, both agents in $S'$ are based on the agent template $a'$, depicted in Figure~\ref{fig:expressivity}~(right),
again with no clocks.
The model $M'$ of the system is isomorphic with the agent template, and its concrete model identical with $M'$.
Moreover, $M$ and $M'$ are models with perfect information, in the sense that the local state of agent $1$ (resp.~$2$)
always uniquely identifies the global state in the model.
Thus, the sets of available strategies with perfect and imperfect information coincide,
and likewise of untimed vs.~timed strategies.
Furthermore, the strategic abilities for strategies with perfect vs.~imperfect recall are
the same for properties expressible in \ATL~\cite{Alur02ATL}.

It is easy to see that the pointed models $(M,q_0q_0)$ and $(M',q_0'q_0')$ are in alternating bisimulation~\cite{Alur98refinement},
and thus they satisfy exactly the same formulas of $\ATL[\ir]$.
By the above argument, they must satisfy the same formulas of $\TATL_\mathcal{S}^\mathcal{M}$, for any $\mathcal{M}\in\set{\cont,\untm}$ and all the strategy types $\mathcal{S}$ considered in this paper.
On the other hand, we have that the \SCTL (and hence also \STCTL) formula
$\varphi \equiv \coop{1}(\Epath\Sometm_{[0,\infty)}\prop{p} \land \Epath\Always_{[0,\infty)}\neg\prop{p})$
holds in $(M,q_0q_0)$ but not in $(M',q_0'q_0')$ for all the strategy types $\mathcal{S}$ and model types $\mathcal{M}\in\set{\cont,\untm}$.

For $\mathcal{M}=\disc$, we adapt the above construction by assuming that each transition consumes $1$ unit of time.
The models $M,M'$ of $S,S'$ are still isomorphic with $S,S'$, and their concrete models $CM,CM'$ are the tree-unfoldings of $M,M'$, thus they are alternating-bisimilar with $M,M'$~\cite{Agotnes07irrevocable}.
In consequence, they satisfy the same formulas of $\TATL_\mathcal{S}^\disc$, for all strategy types $\mathcal{S}$.
On the other hand, the above \SCTL and \STCTL formula $\varphi$ holds in $(M,q_0q_0)$ but not in $(M',q_0'q_0')$ for $\mathcal{M}=\disc$ and all $\mathcal{S}$.
\end{proof}

The following is a straightforward corollary.
\begin{theorem}
For any strategy type $\mathcal{S}$ and model type $\mathcal{M}$,
$\STCTL_\mathcal{S}^\mathcal{M}$ has strictly larger expressive and distinguishing power than $\TATL_\mathcal{S}^\mathcal{M}$.
\end{theorem}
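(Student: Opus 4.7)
The plan is to derive the theorem directly from the two propositions that immediately precede it, combining them to obtain strict (rather than merely non-strict) containment of both expressive and distinguishing power. Since the theorem is explicitly advertised as ``a straightforward corollary,'' I would not introduce any new constructions; the entire argument reduces to chaining the inclusions established above.

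First, I would invoke the first proposition, which gives $\TATL_\mathcal{S}^\mathcal{M} \lexpr \STCTL_\mathcal{S}^\mathcal{M}$ (and consequently $\TATL_\mathcal{S}^\mathcal{M} \ldist \STCTL_\mathcal{S}^\mathcal{M}$) for any $\mathcal{S}$ and $\mathcal{M}$, justified by the syntactic inclusion of \TATL in \STCTL. This establishes the ``at least as expressive'' and ``at least as distinguishing'' halves of the statement.

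Next, I would invoke the second proposition, which gives $\STCTL_\mathcal{S}^\mathcal{M} \not\ldist \TATL_\mathcal{S}^\mathcal{M}$, and hence also $\STCTL_\mathcal{S}^\mathcal{M} \not\lexpr \TATL_\mathcal{S}^\mathcal{M}$, using the paired models $(M,q_0q_0)$ and $(M',q_0'q_0')$ and the witness formula $\varphi \equiv \coop{1}(\Epath\Sometm_{[0,\infty)}\prop{p} \land \Epath\Always_{[0,\infty)}\neg\prop{p})$ constructed there. This rules out any reverse translation from \STCTL into \TATL.

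Putting the two together, the containment $\TATL_\mathcal{S}^\mathcal{M} \lexpr \STCTL_\mathcal{S}^\mathcal{M}$ is strict, and likewise for $\ldist$, which is exactly the theorem. There is no real obstacle here: the only thing to be careful about is making sure the quantification over $\mathcal{S}\in\set{\ir,\Ir,\iR,\IR}$ and $\mathcal{M}\in\set{\cont,\disc,\untm}$ is handled uniformly, which follows because both propositions were stated and proved for every such pair. Hence no additional case analysis is required.
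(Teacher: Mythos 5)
Your proposal is correct and is exactly the argument the paper intends: the theorem is presented as a straightforward corollary obtained by combining the preceding two propositions ($\TATL_\mathcal{S}^\mathcal{M} \lexpr \STCTL_\mathcal{S}^\mathcal{M}$ together with $\STCTL_\mathcal{S}^\mathcal{M} \not\ldist \TATL_\mathcal{S}^\mathcal{M}$, hence $\not\lexpr$), yielding strictness of both the expressive and the distinguishing-power inclusions. Nothing further is needed.
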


\section{The Asynchronous Case}
\label{sec:async}

This section considers the case of asynchronous multi-agent systems (\AMAS),
providing the syntax and semantics of continuous time, discrete time, 
and untimed \AMAS.

\subsection{Asynchronous \MAS}

\emph{Asynchronous Multi-Agent Systems} (\AMAS~\cite{AAMASWJWPPDAM2018a}) are
a modern semantic model for the study of agents' strategies in asynchronous systems. 
Technically, \AMAS are similar to networks of automata that synchronise on
shared \event/s, and interleave local transitions to execute asynchronously
\cite{Fagin95knowledge,LomuscioPQ10a,AAMASWJWPPDAM2018a}.
However, to deal with agents coalitions, automata semantics (\eg for Timed Automata) must
resort to algorithms and additional attributes.
In contrast, by linking protocols to agents, \AMAS are a natural compositional formalism
to analyse multi-agent systems.

\subsection{Continuous Time \AMAS}
\label{sec:TAMAS}

\begin{definition}[\TAMAS]
\label{def:TAMAS:def}
A continuous time \AMAS (\TAMAS) is defined as \TMAS except for the following component:
\begin{itemize}
\item a (partial) \emph{local transition function} 
$\Trans_i: \Locations_i \times \Events_i \times \Constraintsi \times 2^{\Clocks_i} \fpart \Locations_i$
such that $\Trans_i(\loc_i,\evt,\cc,X) = \loc'_i$ for some $\loc'_i \in \Locations_i$  iff 
          $\evt \in \Prot_i(\loc_i)$, $\cc \in \Constraintsi$, and $X \subseteq \Clocks_i$;
\end{itemize}
\end{definition}
Note that as opposed to synchronous \MAS in \Cref{def:MAS:def}, the local transition
function of \AMAS is defined on local \event/s only.
This is also reflected in the formal definition of \AMAS models, 
also called Interleaved Interpreted Systems \cite{LomuscioPQ10,AAMASWJWPPDAM2018a}.

\begin{definition}[Model of a \TAMAS]
\label{def:TAMAS:model}
Let $\PV = \bigcup_{i=1}^n \PV_i$ be the union of the local propositions.
The \emph{model} of a \TAMAS is defined as the model for a \TMAS except for 
the following component:
\begin{itemize}
\item $\Trans: \States \times \Events \times \Constraints \times \Clocks \rightarrow \States$
	    \st
			$\Trans(\state,\evt,\bigwedge_{i \in \Agent(\evt)} \cc_i,\Clocks) = \state'$ iff
			$\forall i\in\Agent(\evt)$, $\Trans_i(\state^i,\evt,\cc_i,X_i) = \state'^i$,
			whereas	$\forall i\in\A\setminus\Agent(\evt)$, $\state^i = \state'^i$;
\end{itemize}
\end{definition}

\begin{definition}[\ACTS]
\label{def:TAMAS:semantics}
The \emph{concrete model} of a \TAMAS model is defined as the \CTS of a \TMAS model 
except for the component:
\begin{itemize}
\item $\to_c \subseteq \CStates \times (\Events \cup \Reals
_{0+}) \times \CStates$ is the \emph{transition relation},
defined by time- and action successors as follows:\\
$(\state,v) \xrightarrow[]{\delta}_c (\state,v+\delta)$
				for $\delta \in \Reals_{0+}$ and $v, v+\delta \in \llbracket \Invariant(\state) \rrbracket$,\\
$(\state,v) \xrightarrow[]{\evt}_c (\state',v')$
				iff there are $\evt \in \Events$, $\cc \in \Constraints$, $X \subseteq \Clocks$~s.t.:
				$\state \xrightarrow[]{\evt,\cc,X} \state' \in \Trans$,
				$v \in \llbracket\cc\rrbracket$,
				$v \in \llbracket\Invariant(\state)\rrbracket$,
				$v' = v[X := 0]$ 
				$v'~\in~\llbracket\Invariant(\state')\rrbracket$.
\end{itemize}
\end{definition}

\begin{example}
\label{ex:voters}
Consider the simple voting model in \Cref{fig:example-sync}, inspired by the election procedures in Estonia \cite{SpringallFD14}.
The voter (\V) needs to register first,
selecting one of the three voting modalities:
postal vote by mail ($\mathit{reg}_m$),
e-vote over the internet ($\mathit{reg}_i$),
or a traditional paper ballot at a polling station ($\mathit{reg}_p$).
The election authority (\EA) accepts \V's registration by synchronising with the registration transition. 
It then proceeds to send a voting package appropriate for \V's chosen modality
($\mathit{pack}_m$, $\mathit{pack}_i$ or $\mathit{pack}_p$),
\eg a postal ballot for voting by mail, e-voting access credentials, or the address of the local election office.
After receiving the package, \V casts a vote for the selected candidate ($
\mathit{vote1}_m$, etc.),
which is registered by \EA.
The local proposition \prop{v_i} denotes that \V voted for candidate $i$.

Time frames are associated with the voting process in the \EA automaton,
which accepts votes by mail between times 1 and 7,
by internet between 6 and 9,
and at the polling station between 10 and 11.
The ballot is closed at time 11.
Moreover, a voter must be registered for a modality before its respective voting period starts.

\begin{figure}[t]
  \centering
  \begin{minipage}{.27\textwidth}
    \begin{center}
    \scalebox{0.67}{
      \begin{tikzpicture}[>=stealth, node distance=2cm]
				\node[state, label=above:$\textbf{V}$,initial, initial text=] (q0) {};
				\node[state] at (-1.5,-1.5) (q1) {};
				\node[state] at (0,-1.5) (q2) {};
				\node[state] at (2,-1.5) (q3) {};
				\node[state] at (-2.5,-3) (q4) {};
				\node[state] at (0,-3) (q5) {};
				\node[state] at (2.5,-3) (q6) {};
				\node[state] at (-3.2,-4.5) (q4a) {$\prop{v_1}$};
				\node[state] at (-1.8,-4.5) (q4b) {$\prop{v_2}$};
				\node[state] at (-0.7,-4.5) (q5a) {$\prop{v_1}$};
				\node[state] at (0.7,-4.5) (q5b) {$\prop{v_2}$};
				\node[state] at (1.8,-4.5) (q6a) {$\prop{v_1}$};
				\node[state] at (3.2,-4.5) (q6b) {$\prop{v_2}$};

        \path[->] (q0) edge node[midway, sloped, above] {$\mathit{reg}_m$} (q1);
        \path[->] (q0) edge node[midway, sloped, above]
          {$\mathit{reg}_i$} (q2);
        \path[->] (q0) edge node[midway, sloped, above] {$\mathit{reg}_p$} (q3);
        \path[->] (q1) edge node[midway, sloped, above] {$\mathit{pack}_m$} (q4);
        \path[->] (q2) edge node[midway, sloped, above]
          {$\mathit{pack}_i$} (q5);
        \path[->] (q3) edge node[midway, sloped, above] {$\mathit{pack}_p$} (q6);
        \path[->] (q4) edge node[midway, sloped, above]
          {$\mathit{vote1}_m$} (q4a);
        \path[->] (q4) edge node[midway, sloped, above]
          {$\mathit{vote2}_m$} (q4b);
        \path[->] (q5) edge node[midway, sloped, above]
          {$\mathit{vote1}_i$} (q5a);
        \path[->] (q5) edge node[midway, sloped, above]
          {$\mathit{vote2}_i$} (q5b);
        \path[->] (q6) edge node[midway, sloped, above]
          {$\mathit{vote1}_p$} (q6a);
        \path[->] (q6) edge node[midway, sloped, above]
          {$\mathit{vote2}_p$} (q6b);

        \path[->] (q4a) edge[loop below] ();
        \path[->] (q4b) edge[loop below] ();
        \path[->] (q5a) edge[loop below] ();
        \path[->] (q5b) edge[loop below] ();
        \path[->] (q6a) edge[loop below] ();
        \path[->] (q6b) edge[loop below] ();
      \end{tikzpicture}
    }
    \end{center}
  \end{minipage}
  \begin{minipage}{.2\textwidth}
    \begin{center}
      \scalebox{0.67}{
        \begin{tikzpicture}[>=stealth, node distance=2cm]
          \node[state, label=above:$\textbf{EA}$] (q1i) {$x\leq0$};
          \node[state] at (-2.5,-1.3) (q1m) {$x\leq0$};
          \node[state] at (2.5,-1.3) (q1p) {$x\leq0$};
          \node[state,initial, initial text=] (q0) at (0,-3) {$t<=11$};
          \node[state] (q2) at (0,-5) {};

          \path[->] (q0) edge[bend left=10, left, near end]
            node {\begin{tabular}{c}$t\leq6$\\$\mathit{reg}_i$\\$x:=0$\end{tabular}}
            (q1i);
          \path[->] (q1i) edge[bend left=10, sloped, above, near start]
            node {$\mathit{pack}_i$} (q0);
          \path[->] (q0) edge[bend left=10, sloped, near end,below]
            node {\begin{tabular}{c}$t\leq1$\\$\mathit{reg}_m$\\$x:=0$\end{tabular}}
            (q1m);
          \path[->] (q1m) edge[bend left=10, sloped, near start,above]
            node {$\mathit{pack}_m$} (q0);
          \path[->] (q0) edge[bend left=10, sloped, near end, above]
            node {\begin{tabular}{c}$t\leq10$\\$\mathit{reg}_p$\\$x:=0$\end{tabular}}
            (q1p);
          \path[->] (q1p) edge[bend left=10, sloped, near start, below]
            node {$\mathit{pack}_p$} (q0);
          \path[->] (q0) edge
            node[left, sloped, midway, above] {\begin{tabular}{c}$t=11$\\$\mathit{close}$
            \end{tabular}} (q2);

          \path[->] (q0) edge[loop right, sloped, distance=2cm]
            node [above] {\begin{tabular}{c} $1\leq t\leq7, \mathit{vote1}_m$\\
              $1\leq t\leq7, \mathit{vote2}_m$ \end{tabular}} (q0);
          \path[->] (q0) edge[loop right, sloped]
            node {\begin{tabular}{c} $6\leq t\leq9, \mathit{vote1}_i$\\
              $6\leq t\leq9, \mathit{vote2}_i$ \end{tabular}} ();
          \path[->] (q0) edge[loop left, sloped]
            node {\begin{tabular}{c} $10\leq t\leq11, \mathit{vote1}_p$\\
              $10\leq t\leq11, \mathit{vote2}_p$ \end{tabular}} ();
        \end{tikzpicture}
      }
    \end{center}
  \end{minipage}

\caption{The \TAMAS of the voting scenario from \Cref{ex:voters}.}
\label{fig:example-sync}
\end{figure}
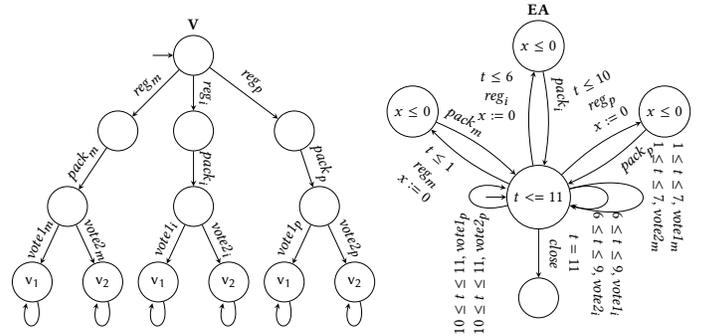

\end{example}

\subsection{Discrete Time \AMAS}

\emph{Discrete time asynchronous multi-agent systems} extend \AMAS with discrete time,
in a way similar to the synchronous case. 

\begin{definition}[\DAMAS]
\label{def:DAMAS:def}
A discrete time \AMAS (\DAMAS) is defined as \DMAS except for the following component: 
\begin{itemize}
\item a (partial) \emph{local transition function} $\Trans_i: \Locations_i \times \Events_i \fpart \Locations_i \times \mathbb{N}_{+}$
such that $\Trans_i(\loc_i,\evt)$ is defined iff $\evt \in \Prot_i(\loc_i)$;
\end{itemize}
\end{definition}

However, when agents share an action, the time the action takes is not enforced to be the 
same for all participants. 
Instead, the duration of the global action is the maximum of the participating agents' durations.
Thus, the slowest agent slows down its partners.

\begin{definition}[Model of \DAMAS]
\label{def:DAMAS:model}
The \emph{model} of a \DAMAS is defined as the model of a \DMAS
except for the component:
\begin{itemize}
\item $\Trans: \States\times \Events \rightarrow \States \times\mathbb{N}_{+}$ is the partial \textit{
transition function},
such that $\Trans(\state,\evt)= (\state',\max_{i \in\Agent(\evt)}\delta_i)$ iff $\Trans_i
(\state_i,\evt) = (\state'_i,\delta_i)$
for all $i \in \Agent(\evt)$, and $\state'_i=\state_i$ for all $i\in\Agents\setminus\Agent (\evt)$.
\end{itemize}
\end{definition}

These changes to local and global transitions are incorporated in the concrete \DAMAS model,
otherwise identical to that of a \DMAS.

\begin{definition}[\ADTS]
\label{def:DAMAS:semantics}
The \emph{concrete model} of a \DAMAS model is defined as the \DTS of a \DMAS model
except for the component:
\begin{itemize}
\item $\outm\colon\TStates\times\Events\to\TStates$ is a (partial)
transition function such that
$\outm((\state, d), \evt) = (\state', d+\delta) \text{ iff }
\Trans(\state, \evt) = (\state', \delta)$,
for $\state,\state'\in \States$, $\evt\in \Events$, 
$d \in\mathbb{N}$, and $\delta \in\mathbb{N}_{+}$.
\end{itemize}
\end{definition}

\subsection{Untimed \AMAS}

Untimed \AMAS can be defined as Timed \AMAS with no clocks, see below.
Note that the definition is essentially equivalent to the concept of 
an \emph{interleaved interpreted system} in~\cite{LomuscioPQ10}.
\begin{definition}[Untimed \AMAS]
\label{def:AMAS:def}
An \emph{untimed asynchronous multi-agent system}, 
simply \AMAS, is a \TAMAS with every $\Clocks_i = \emptyset$.
The model and concrete model of an \AMAS are equal and defined as in Definition~\ref{def:TAMAS:model} 
without clocks.
\end{definition}

\subsection{Model Checking in \AMAS}

The semantics of \STCTL (\TATL) is the same as in the synchronous case
except for each $\jevt$ to be replaced by $\evt$ in the paths.
In principle, the model checking procedures and complexity results for \STCTL and \TATL 
and their untimed variants given in \Cref{sec:modch} also apply to asynchronous models.
Note, however, that complexity is specified wrt. the model size,
which in \AMAS is significantly larger due to asynchronous interleaving of agents' actions.
On the other hand, the associated blow-up of state- and transition-space can be alleviated  
via techniques such as partial order reductions \cite{POR4ATL-JAIR}.

\section{Experiments}
\label{sec:expe}

In this section, we aim to show that model checking \STCTL[ir] is practically feasible.
To that end, we implemented the \TAMAS from \Cref{ex:voters} in the \imitator model checker \cite{AndreFKS12},
and conducted a set of initial experiments
using formulas $\varphi_{|\A|} = \coop{\A}\Epath\mathtt{F}_{[0,8]}\prop{v_1}$,
which specify that voter(s)\footnote{Coalition specified explicitly here for clarity; all voters in the \TAMAS are symmetrical.}
in $\A = \set{voter_1,\dots,voter_{|\A|}}$ have a strategy to vote for the first candidate 
within 8 time units,
\ie{}, reach a state labelled with the local proposition $\prop{v_1}$ before 8.

\imitator allows for \TCTL model checking and uses an asynchronous semantics on networks of timed automata, which fits our purposes.
Furthermore, as a state-of-the-art tool for Parametric Timed Automata,
it enables us to encode agents' strategies as parameters: for each coalition agent,
we add a parameter for each transition and a guard such that the
parameter corresponding to the transition is 1 while those corresponding to the other
transitions exiting the same location are 0. 
Note that this is not necessary when a single transition exits a location as there 
is no choice and thus no influence on the strategy.

Our model is scaled with the number of voters $v$ and the number of candidates $c$,
and we verify formulas $\varphi_1$, $\varphi_2$ and $\varphi_3$,
\ie, with one to three agents (voters) in the coalition $\A$ (\Cref{fig:results_all}, top).
The expected result is obtained: 
the voter(s) have a strategy to enforce $\Epath\mathtt{F}_{[0,8]}\prop{v_1}$,
which consists in choosing either the mail or the internet modality, and then voting for candidate 1.

While this already demonstrates the feasibility of \STCTL[ir] model checking,
the use of \imitator additionally provides (for free) the synthesis of \emph{all} 
strategies (\Cref{fig:results_all}, bottom).
However, this quickly faces a blowup in computation time.
On the other hand, a single strategy of one agent in the formula $\varphi_1$
can be obtained within the same timeout (120s) for significantly larger models,
with as many as 180 voters and 2 candidates, or 200 voters and 1 candidate.
The code and binaries required to replicate the experiments are accessible at \url{https://depot.lipn.univ-paris13.fr/mosart/publications/stctl}.

\begin{figure}[t]\centering
\begin{minipage}{.45\textwidth}\centering
		\scalebox{0.60}{
		\includegraphics[clip=true,trim=7.9cm 3.7cm 8.2cm 11.6cm]{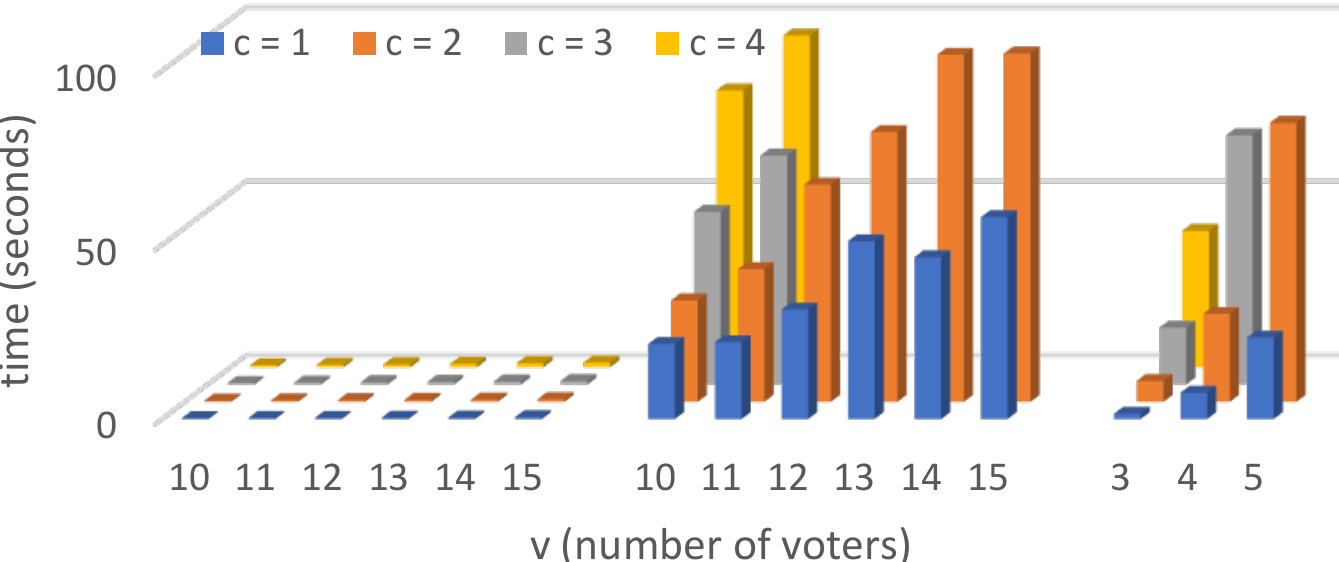}
		}
\end{minipage}
\hspace{.08\textwidth}
\begin{minipage}{.45\textwidth}\centering
		\scalebox{0.6}{
		\includegraphics[clip=true,trim=7.9cm 3.4cm 9.2cm 11.4cm]{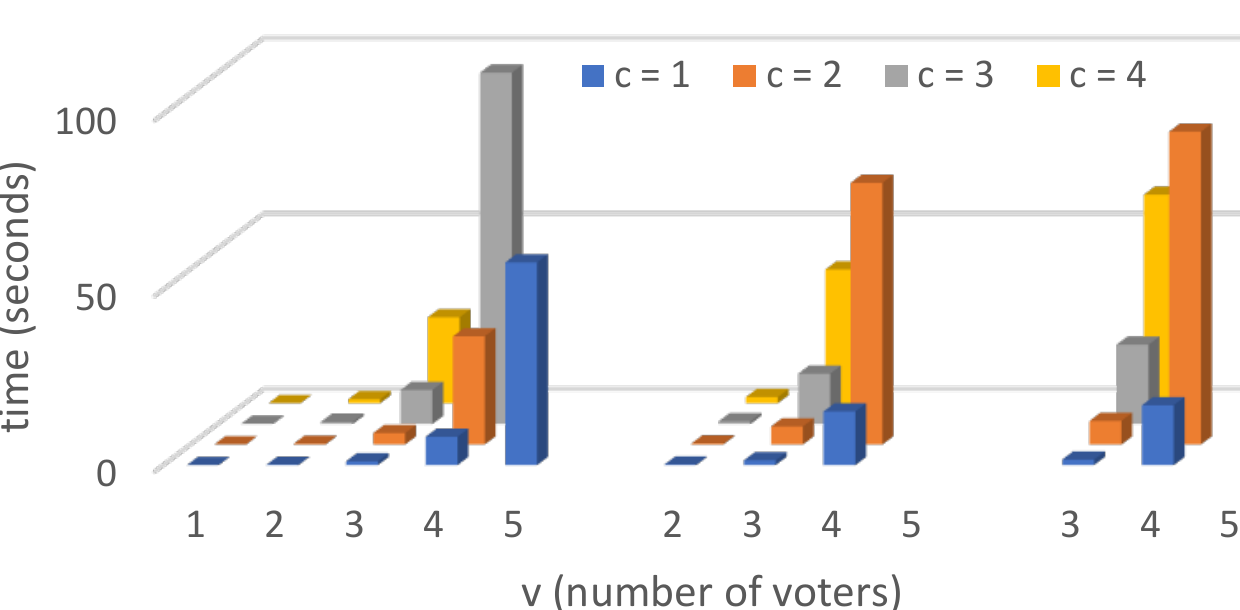}
		}
\end{minipage}
	\caption{Top: model checking $\varphi_1$ (left), $\varphi_2$ (middle), and $\varphi_3$ (right) with $v$ voters and $c$ candidates.
	Bottom: also synthesising all strategies.
	Missing bars indicate timeout (> 120s).}
	\label{fig:results_all}
\end{figure}

\section{Conclusions and Future Work}
\label{sec:conclu}

This paper shows that \STCTL, being a syntactic extension of \TATL, 
but interpreted over timed models with continuous semantics, in both synchronous and asynchronous settings,
is of theoretical and practical interest in model checking with ir- and Ir-strategies.
Our plans for future research include:
investigating also counting and timed strategies, 
a finer tuning of a model checking practical approach to easily 
capture all \STCTL properties, and
extending \STCTL to \STCTLs.
Moreover, since we have observed that synthesis of all strategies is too time consuming, 
but feasible even with the existing tool, we plan to implement a smarter, dedicated algorithm.


\begin{acks}
This work was partially funded by the CNRS IEA project MoSART
and by the PolLux/FNR projects STV (POLLUX-VII/1/2019) and SpaceVote.
\end{acks}


\balance
\bibliographystyle{ACM-Reference-Format}
\bibliography{report,wojtek,wojtek-own}


\begin{thebibliography}{34}


\ifx \showCODEN    \undefined \def \showCODEN     #1{\unskip}     \fi
\ifx \showDOI      \undefined \def \showDOI       #1{#1}\fi
\ifx \showISBNx    \undefined \def \showISBNx     #1{\unskip}     \fi
\ifx \showISBNxiii \undefined \def \showISBNxiii  #1{\unskip}     \fi
\ifx \showISSN     \undefined \def \showISSN      #1{\unskip}     \fi
\ifx \showLCCN     \undefined \def \showLCCN      #1{\unskip}     \fi
\ifx \shownote     \undefined \def \shownote      #1{#1}          \fi
\ifx \showarticletitle \undefined \def \showarticletitle #1{#1}   \fi
\ifx \showURL      \undefined \def \showURL       {\relax}        \fi
\providecommand\bibfield[2]{#2}
\providecommand\bibinfo[2]{#2}
\providecommand\natexlab[1]{#1}
\providecommand\showeprint[2][]{arXiv:#2}

\bibitem[\protect\citeauthoryear{{\AA}gotnes, Goranko, and Jamroga}{{\AA}gotnes
  et~al\mbox{.}}{2007}]%
        {Agotnes07irrevocable}
\bibfield{author}{\bibinfo{person}{T. {\AA}gotnes}, \bibinfo{person}{V.
  Goranko}, {and} \bibinfo{person}{W. Jamroga}.}
  \bibinfo{year}{2007}\natexlab{}.
\newblock \showarticletitle{Alternating-time Temporal Logics with Irrevocable
  Strategies}. In \bibinfo{booktitle}{\emph{Proceedings of the 11th Conference
  on Theoretical Aspects of Rationality and Knowledge ({TARK '07})}}.
  \bibinfo{publisher}{ACM}, \bibinfo{address}{New York, NY, USA},
  \bibinfo{pages}{15--24}.
\newblock


\bibitem[\protect\citeauthoryear{{\AA}gotnes, Goranko, Jamroga, and
  Wooldridge}{{\AA}gotnes et~al\mbox{.}}{2015}]%
        {Agotnes15handbook}
\bibfield{author}{\bibinfo{person}{T. {\AA}gotnes}, \bibinfo{person}{V.
  Goranko}, \bibinfo{person}{W. Jamroga}, {and} \bibinfo{person}{M.
  Wooldridge}.} \bibinfo{year}{2015}\natexlab{}.
\newblock \showarticletitle{Knowledge and Ability}.
\newblock In \bibinfo{booktitle}{\emph{Handbook of Epistemic Logic}}.
  \bibinfo{publisher}{College Publications}, \bibinfo{address}{Rickmansworth,
  UK}, \bibinfo{pages}{543--589}.
\newblock


\bibitem[\protect\citeauthoryear{Alur, Courcoubetis, and Dill}{Alur
  et~al\mbox{.}}{1993}]%
        {AlurCD93}
\bibfield{author}{\bibinfo{person}{R. Alur}, \bibinfo{person}{C. Courcoubetis},
  {and} \bibinfo{person}{D.~L. Dill}.} \bibinfo{year}{1993}\natexlab{}.
\newblock \showarticletitle{Model-Checking in Dense Real-time}.
\newblock \bibinfo{journal}{\emph{Inf. Comput.}} \bibinfo{volume}{104},
  \bibinfo{number}{1} (\bibinfo{year}{1993}), \bibinfo{pages}{2--34}.
\newblock


\bibitem[\protect\citeauthoryear{Alur and Dill}{Alur and Dill}{1990}]%
        {AlurD90}
\bibfield{author}{\bibinfo{person}{R. Alur} {and} \bibinfo{person}{D.~L.
  Dill}.} \bibinfo{year}{1990}\natexlab{}.
\newblock \showarticletitle{Automata For Modeling Real-Time Systems}. In
  \bibinfo{booktitle}{\emph{Proceedings of the 17th International Colloquium on
  Automata, Languages and Programming, ICALP90}}
  \emph{(\bibinfo{series}{Lecture Notes in Computer Science},
  Vol.~\bibinfo{volume}{443})}. \bibinfo{publisher}{Springer},
  \bibinfo{address}{Berlin, Germany}, \bibinfo{pages}{322--335}.
\newblock


\bibitem[\protect\citeauthoryear{Alur, Henzinger, Kupferman, and Vardi}{Alur
  et~al\mbox{.}}{1998}]%
        {Alur98refinement}
\bibfield{author}{\bibinfo{person}{R. Alur}, \bibinfo{person}{T.A. Henzinger},
  \bibinfo{person}{O. Kupferman}, {and} \bibinfo{person}{M.Y. Vardi}.}
  \bibinfo{year}{1998}\natexlab{}.
\newblock \showarticletitle{Alternating Refinement Relations}. In
  \bibinfo{booktitle}{\emph{Proceedings of the 9th International Conference on
  Concurrency Theory, {CONCUR} '98}} \emph{(\bibinfo{series}{\LNCS{}},
  Vol.~\bibinfo{volume}{1466})}. \bibinfo{publisher}{Springer},
  \bibinfo{address}{Berlin, Germany}, \bibinfo{pages}{163--178}.
\newblock


\bibitem[\protect\citeauthoryear{Alur, Henzinger, and Kupferman}{Alur
  et~al\mbox{.}}{1997}]%
        {Alur97ATL}
\bibfield{author}{\bibinfo{person}{R. Alur}, \bibinfo{person}{T.~A. Henzinger},
  {and} \bibinfo{person}{O. Kupferman}.} \bibinfo{year}{1997}\natexlab{}.
\newblock \showarticletitle{{A}lternating-Time {T}emporal {L}ogic}. In
  \bibinfo{booktitle}{\emph{Proceedings of the 38th Annual Symposium on
  Foundations of Computer Science ({FOCS} '97)}}. \bibinfo{publisher}{IEEE
  Computer Society}, \bibinfo{address}{Palo Alto, CA, USA},
  \bibinfo{pages}{100--109}.
\newblock


\bibitem[\protect\citeauthoryear{Alur, Henzinger, and Kupferman}{Alur
  et~al\mbox{.}}{2002}]%
        {Alur02ATL}
\bibfield{author}{\bibinfo{person}{R. Alur}, \bibinfo{person}{T.~A. Henzinger},
  {and} \bibinfo{person}{O. Kupferman}.} \bibinfo{year}{2002}\natexlab{}.
\newblock \showarticletitle{{A}lternating-Time {T}emporal {L}ogic}.
\newblock \bibinfo{journal}{\emph{J. ACM}}  \bibinfo{volume}{49}
  (\bibinfo{year}{2002}), \bibinfo{pages}{672--713}.
\newblock


\bibitem[\protect\citeauthoryear{Andr{\'{e}}, Fribourg, K{\"{u}}hne, and
  Soulat}{Andr{\'{e}} et~al\mbox{.}}{2012}]%
        {AndreFKS12}
\bibfield{author}{\bibinfo{person}{{\'{E}}. Andr{\'{e}}}, \bibinfo{person}{L.
  Fribourg}, \bibinfo{person}{U. K{\"{u}}hne}, {and} \bibinfo{person}{R.
  Soulat}.} \bibinfo{year}{2012}\natexlab{}.
\newblock \showarticletitle{{IMITATOR} 2.5: {A} Tool for Analyzing Robustness
  in Scheduling Problems}. In \bibinfo{booktitle}{\emph{{FM}~2012}}
  \emph{(\bibinfo{series}{Lecture Notes in Computer Science},
  Vol.~\bibinfo{volume}{7436})}. \bibinfo{publisher}{Springer},
  \bibinfo{address}{Berlin, Germany}, \bibinfo{pages}{33--36}.
\newblock


\bibitem[\protect\citeauthoryear{Bouyer}{Bouyer}{2009}]%
        {Bouyer09}
\bibfield{author}{\bibinfo{person}{P. Bouyer}.}
  \bibinfo{year}{2009}\natexlab{}.
\newblock \showarticletitle{Model-checking Timed Temporal Logics}.
\newblock \bibinfo{journal}{\emph{Electronic Notes in Theoretical Computer
  Science}}  \bibinfo{volume}{231} (\bibinfo{year}{2009}),
  \bibinfo{pages}{323--341}.
\newblock


\bibitem[\protect\citeauthoryear{Bulling, Goranko, and Jamroga}{Bulling
  et~al\mbox{.}}{2015}]%
        {Goranko15stratmas}
\bibfield{author}{\bibinfo{person}{N. Bulling}, \bibinfo{person}{V. Goranko},
  {and} \bibinfo{person}{W. Jamroga}.} \bibinfo{year}{2015}\natexlab{}.
\newblock \showarticletitle{Logics for Reasoning About Strategic Abilities in
  Multi-Player Games}.
\newblock In \bibinfo{booktitle}{\emph{Models of Strategic Reasoning. Logics,
  Games, and Communities}}. \bibinfo{series}{\LNCS{}},
  Vol.~\bibinfo{volume}{8972}. \bibinfo{publisher}{Springer},
  \bibinfo{address}{Berlin, Germany}, \bibinfo{pages}{93--136}.
\newblock


\bibitem[\protect\citeauthoryear{Dima and Tiplea}{Dima and Tiplea}{2011}]%
        {Dima11undecidable}
\bibfield{author}{\bibinfo{person}{C. Dima} {and} \bibinfo{person}{F.L.
  Tiplea}.} \bibinfo{year}{2011}\natexlab{}.
\newblock \showarticletitle{Model-checking {ATL} under Imperfect Information
  and Perfect Recall Semantics is Undecidable}.
\newblock \bibinfo{journal}{\emph{CoRR}}  \bibinfo{volume}{abs/1102.4225}
  (\bibinfo{year}{2011}), \bibinfo{pages}{1--17}.
\newblock


\bibitem[\protect\citeauthoryear{Faella, {La Torre}, and Murano}{Faella
  et~al\mbox{.}}{2014}]%
        {Faella2014}
\bibfield{author}{\bibinfo{person}{M. Faella}, \bibinfo{person}{S. {La Torre}},
  {and} \bibinfo{person}{A. Murano}.} \bibinfo{year}{2014}\natexlab{}.
\newblock \showarticletitle{Automata-theoretic Decision of Timed Games}.
\newblock \bibinfo{journal}{\emph{Theoretical Computer Science}}
  \bibinfo{volume}{515} (\bibinfo{year}{2014}), \bibinfo{pages}{46--63}.
\newblock
\showISSN{0304-3975}


\bibitem[\protect\citeauthoryear{Faella, Torre, and Murano}{Faella
  et~al\mbox{.}}{2002}]%
        {FaellaTM02}
\bibfield{author}{\bibinfo{person}{M. Faella}, \bibinfo{person}{S.~La Torre},
  {and} \bibinfo{person}{A. Murano}.} \bibinfo{year}{2002}\natexlab{}.
\newblock \showarticletitle{Dense Real-Time Games}. In
  \bibinfo{booktitle}{\emph{Proceedings of the 17th {IEEE} Symposium on Logic
  in Computer Science {(LICS} 2002)}}. \bibinfo{publisher}{{IEEE} Computer
  Society}, \bibinfo{address}{Palo Alto, CA, USA}, \bibinfo{pages}{167--176}.
\newblock


\bibitem[\protect\citeauthoryear{Fagin, Halpern, Moses, and Vardi}{Fagin
  et~al\mbox{.}}{1995}]%
        {Fagin95knowledge}
\bibfield{author}{\bibinfo{person}{R. Fagin}, \bibinfo{person}{J.~Y. Halpern},
  \bibinfo{person}{Y. Moses}, {and} \bibinfo{person}{M.~Y. Vardi}.}
  \bibinfo{year}{1995}\natexlab{}.
\newblock \bibinfo{booktitle}{\emph{Reasoning about Knowledge}}.
\newblock \bibinfo{publisher}{MIT Press}, \bibinfo{address}{Cambridge, MA,
  USA}.
\newblock


\bibitem[\protect\citeauthoryear{Goranko and Jamroga}{Goranko and
  Jamroga}{2004}]%
        {Goranko04comparingKRA}
\bibfield{author}{\bibinfo{person}{V. Goranko} {and} \bibinfo{person}{W.
  Jamroga}.} \bibinfo{year}{2004}\natexlab{}.
\newblock \showarticletitle{Comparing Semantics of Logics for Multi-agent
  Systems}.
\newblock \bibinfo{journal}{\emph{Synthese}} \bibinfo{volume}{139},
  \bibinfo{number}{2} (\bibinfo{year}{2004}), \bibinfo{pages}{241--280}.
\newblock


\bibitem[\protect\citeauthoryear{Huang and van~der Meyden}{Huang and van~der
  Meyden}{2014}]%
        {Huang14symbolic-epist}
\bibfield{author}{\bibinfo{person}{X. Huang} {and} \bibinfo{person}{R. van~der
  Meyden}.} \bibinfo{year}{2014}\natexlab{}.
\newblock \showarticletitle{Symbolic Model Checking Epistemic Strategy Logic}.
  In \bibinfo{booktitle}{\emph{Proceedings of the 28th {AAAI} Conference on
  Artificial Intelligence ({AAAI}14)}}. \bibinfo{publisher}{{AAAI} Press},
  \bibinfo{address}{Palo Alto, CA, USA}, \bibinfo{pages}{1426--1432}.
\newblock


\bibitem[\protect\citeauthoryear{Jamroga}{Jamroga}{2008}]%
        {Jamroga08reduction}
\bibfield{author}{\bibinfo{person}{W. Jamroga}.}
  \bibinfo{year}{2008}\natexlab{}.
\newblock \showarticletitle{Knowledge and Strategic Ability for Model Checking:
  A Refined Approach}. In \bibinfo{booktitle}{\emph{Proceedings of the 6th
  German Conference on Multiagent Systems Technologies, {MATES} 2008}}
  \emph{(\bibinfo{series}{\LNCS{}}, Vol.~\bibinfo{volume}{5244})}.
  \bibinfo{publisher}{Springer}, \bibinfo{address}{Berlin, Germany},
  \bibinfo{pages}{99--110}.
\newblock


\bibitem[\protect\citeauthoryear{Jamroga}{Jamroga}{2015}]%
        {Jamroga15specificationMAS}
\bibfield{author}{\bibinfo{person}{W. Jamroga}.}
  \bibinfo{year}{2015}\natexlab{}.
\newblock \bibinfo{booktitle}{\emph{Logical Methods for Specification and
  Verification of Multi-Agent Systems}}.
\newblock \bibinfo{publisher}{ICS PAS Publishing House},
  \bibinfo{address}{Warsaw, Poland}.
\newblock
\showISBNx{978-83-63159-25-2}


\bibitem[\protect\citeauthoryear{Jamroga and Dix}{Jamroga and Dix}{2006}]%
        {Jamroga06atlir-eumas}
\bibfield{author}{\bibinfo{person}{W. Jamroga} {and} \bibinfo{person}{J. Dix}.}
  \bibinfo{year}{2006}\natexlab{}.
\newblock \showarticletitle{Model Checking {ATL$_{ir}$} is Indeed
  {$\Delta_2^P$}-complete}. In \bibinfo{booktitle}{\emph{Proceedings of the 4th
  European Workshop on Multi-Agent Systems, {EUMAS} '06}}
  \emph{(\bibinfo{series}{{CEUR} Workshop Proceedings},
  Vol.~\bibinfo{volume}{223})}. \bibinfo{publisher}{CEUR-WS.org},
  \bibinfo{address}{Aachen, Germany}, \bibinfo{pages}{13--24}.
\newblock


\bibitem[\protect\citeauthoryear{Jamroga, Konikowska, and Penczek}{Jamroga
  et~al\mbox{.}}{2016}]%
        {JamrogaKP16}
\bibfield{author}{\bibinfo{person}{W. Jamroga}, \bibinfo{person}{B.
  Konikowska}, {and} \bibinfo{person}{W. Penczek}.}
  \bibinfo{year}{2016}\natexlab{}.
\newblock \showarticletitle{Multi-Valued Verification of Strategic Ability}. In
  \bibinfo{booktitle}{\emph{Proceedings of the 15th International Conference on
  Autonomous Agents and Multiagent Systems (AAMAS 2016)}}.
  \bibinfo{publisher}{{ACM}}, \bibinfo{address}{New York, NY, USA},
  \bibinfo{pages}{1180--1189}.
\newblock


\bibitem[\protect\citeauthoryear{Jamroga, Penczek, Dembinski, and
  Mazurkiewicz}{Jamroga et~al\mbox{.}}{2018}]%
        {AAMASWJWPPDAM2018a}
\bibfield{author}{\bibinfo{person}{W. Jamroga}, \bibinfo{person}{W. Penczek},
  \bibinfo{person}{P. Dembinski}, {and} \bibinfo{person}{{A. W.}
  Mazurkiewicz}.} \bibinfo{year}{2018}\natexlab{}.
\newblock \showarticletitle{Towards Partial Order Reductions for Strategic
  Ability}. In \bibinfo{booktitle}{\emph{Proceedings of the 17th International
  Conference on Autonomous Agents and Multiagent Systems}}
  \emph{(\bibinfo{series}{AAMAS 2018})}. \bibinfo{publisher}{{IFAAMAS}},
  \bibinfo{address}{Richland, SC, USA}, \bibinfo{pages}{156--165}.
\newblock


\bibitem[\protect\citeauthoryear{Jamroga, Penczek, Sidoruk, Dembinski, and
  Mazurkiewicz}{Jamroga et~al\mbox{.}}{2020}]%
        {POR4ATL-JAIR}
\bibfield{author}{\bibinfo{person}{W. Jamroga}, \bibinfo{person}{W. Penczek},
  \bibinfo{person}{T. Sidoruk}, \bibinfo{person}{P. Dembinski}, {and}
  \bibinfo{person}{A.~W. Mazurkiewicz}.} \bibinfo{year}{2020}\natexlab{}.
\newblock \showarticletitle{{T}owards {P}artial {O}rder {R}eductions for
  {S}trategic {A}bility}.
\newblock \bibinfo{journal}{\emph{{JAIR}}}  \bibinfo{volume}{68}
  (\bibinfo{year}{2020}), \bibinfo{pages}{817--850}.
\newblock


\bibitem[\protect\citeauthoryear{Kacprzak and Penczek}{Kacprzak and
  Penczek}{2004}]%
        {Kacprzak04umc-atl}
\bibfield{author}{\bibinfo{person}{M. Kacprzak} {and} \bibinfo{person}{W.
  Penczek}.} \bibinfo{year}{2004}\natexlab{}.
\newblock \showarticletitle{Unbounded Model Checking for Alternating-Time
  Temporal Logic}. In \bibinfo{booktitle}{\emph{Proceedings of the 3rd
  International Joint Conference on Autonomous Agents and Multiagent Systems
  ({AAMAS} 2004)}}. \bibinfo{publisher}{{IEEE} Computer Society},
  \bibinfo{address}{Palo Alto, CA, USA}, \bibinfo{pages}{646--653}.
\newblock


\bibitem[\protect\citeauthoryear{Kacprzak and Penczek}{Kacprzak and
  Penczek}{2005}]%
        {KacprzakP05}
\bibfield{author}{\bibinfo{person}{M. Kacprzak} {and} \bibinfo{person}{W.
  Penczek}.} \bibinfo{year}{2005}\natexlab{}.
\newblock \showarticletitle{Fully Symbolic Unbounded Model Checking for
  Alternating-time Temporal Logic}.
\newblock \bibinfo{journal}{\emph{Autonomous Agents and Multi-Agent Systems}}
  \bibinfo{volume}{11}, \bibinfo{number}{1} (\bibinfo{year}{2005}),
  \bibinfo{pages}{69--89}.
\newblock


\bibitem[\protect\citeauthoryear{Knapik, Andr{\'{e}}, Petrucci, Jamroga, and
  Penczek}{Knapik et~al\mbox{.}}{2019}]%
        {KnapikAPJP19}
\bibfield{author}{\bibinfo{person}{M. Knapik}, \bibinfo{person}{{\'{E}}.
  Andr{\'{e}}}, \bibinfo{person}{L. Petrucci}, \bibinfo{person}{W. Jamroga},
  {and} \bibinfo{person}{W. Penczek}.} \bibinfo{year}{2019}\natexlab{}.
\newblock \showarticletitle{Timed {ATL:} Forget Memory, Just Count}.
\newblock \bibinfo{journal}{\emph{Journal of Artificial Intelligence Research}}
   \bibinfo{volume}{66} (\bibinfo{year}{2019}), \bibinfo{pages}{197--223}.
\newblock


\bibitem[\protect\citeauthoryear{Kupferman, Vardi, and Wolper}{Kupferman
  et~al\mbox{.}}{2001}]%
        {KupfermanVW01}
\bibfield{author}{\bibinfo{person}{O. Kupferman}, \bibinfo{person}{M.~Y.
  Vardi}, {and} \bibinfo{person}{P. Wolper}.} \bibinfo{year}{2001}\natexlab{}.
\newblock \showarticletitle{Module Checking}.
\newblock \bibinfo{journal}{\emph{Information and Computation}}
  \bibinfo{volume}{164}, \bibinfo{number}{2} (\bibinfo{year}{2001}),
  \bibinfo{pages}{322--344}.
\newblock


\bibitem[\protect\citeauthoryear{Laroussinie, Markey, and Oreiby}{Laroussinie
  et~al\mbox{.}}{2006}]%
        {TATL2006}
\bibfield{author}{\bibinfo{person}{F. Laroussinie}, \bibinfo{person}{N.
  Markey}, {and} \bibinfo{person}{G. Oreiby}.} \bibinfo{year}{2006}\natexlab{}.
\newblock \showarticletitle{{M}odel-{C}hecking {T}imed {ATL} for {D}urational
  {C}oncurrent {G}ame {S}tructures}. In \bibinfo{booktitle}{\emph{Proceedings
  of the 4th International Conference on Formal Modeling and Analysis of Timed
  Systems, {FORMATS} 2006}} \emph{(\bibinfo{series}{Lecture Notes in Computer
  Science}, Vol.~\bibinfo{volume}{4202})}. \bibinfo{publisher}{Springer},
  \bibinfo{address}{Berlin, Germany}, \bibinfo{pages}{245--259}.
\newblock


\bibitem[\protect\citeauthoryear{Lomuscio, Penczek, and Qu}{Lomuscio
  et~al\mbox{.}}{2010a}]%
        {LomuscioPQ10a}
\bibfield{author}{\bibinfo{person}{A. Lomuscio}, \bibinfo{person}{W. Penczek},
  {and} \bibinfo{person}{H. Qu}.} \bibinfo{year}{2010}\natexlab{a}.
\newblock \showarticletitle{Partial Order Reductions for Model Checking
  Temporal Epistemic Logics Over Interleaved Multi-agent Systems}. In
  \bibinfo{booktitle}{\emph{{AAMAS}~2010}}, Vol.~\bibinfo{volume}{1-3}.
  \bibinfo{publisher}{{IFAAMAS}}, \bibinfo{address}{Richland, SC, USA},
  \bibinfo{pages}{659--666}.
\newblock


\bibitem[\protect\citeauthoryear{Lomuscio, Penczek, and Qu}{Lomuscio
  et~al\mbox{.}}{2010b}]%
        {LomuscioPQ10}
\bibfield{author}{\bibinfo{person}{A. Lomuscio}, \bibinfo{person}{W. Penczek},
  {and} \bibinfo{person}{H. Qu}.} \bibinfo{year}{2010}\natexlab{b}.
\newblock \showarticletitle{Partial Order Reductions for Model Checking
  Temporal-epistemic Logics over Interleaved Multi-agent Systems}.
\newblock \bibinfo{journal}{\emph{Fundamenta Informaticae}}
  \bibinfo{volume}{101}, \bibinfo{number}{1-2} (\bibinfo{year}{2010}),
  \bibinfo{pages}{71--90}.
\newblock


\bibitem[\protect\citeauthoryear{Lomuscio, Qu, and Raimondi}{Lomuscio
  et~al\mbox{.}}{2015}]%
        {Lomuscio15mcmas}
\bibfield{author}{\bibinfo{person}{A. Lomuscio}, \bibinfo{person}{H. Qu}, {and}
  \bibinfo{person}{F. Raimondi}.} \bibinfo{year}{2015}\natexlab{}.
\newblock \showarticletitle{{MCMAS}: An Open-Source Model Checker for the
  Verification of Multi-Agent Systems}.
\newblock \bibinfo{journal}{\emph{International Journal on Software Tools for
  Technology Transfer}}  \bibinfo{volume}{24} (\bibinfo{year}{2015}),
  \bibinfo{pages}{84--90}.
\newblock


\bibitem[\protect\citeauthoryear{Lomuscio and Ryan}{Lomuscio and Ryan}{1997}]%
        {Lomuscio97interpreted}
\bibfield{author}{\bibinfo{person}{A. Lomuscio} {and} \bibinfo{person}{M.
  Ryan}.} \bibinfo{year}{1997}\natexlab{}.
\newblock \showarticletitle{On the Relation Between Interpreted Systems and
  Kripke Models}.
\newblock In \bibinfo{booktitle}{\emph{Agents and Multi-Agent Systems
  Formalisms, Methodologies, and Applications}}. \bibinfo{series}{Lecture Notes
  in Artificial Intelligence}, Vol.~\bibinfo{volume}{1441}.
  \bibinfo{publisher}{Springer}, \bibinfo{address}{Berlin, Germany},
  \bibinfo{pages}{46--59}.
\newblock


\bibitem[\protect\citeauthoryear{Schobbens}{Schobbens}{2004}]%
        {Schobbens04}
\bibfield{author}{\bibinfo{person}{P.~Y. Schobbens}.}
  \bibinfo{year}{2004}\natexlab{}.
\newblock \showarticletitle{Alternating-time Logic with Imperfect Recall}. In
  \bibinfo{booktitle}{\emph{1st International Workshop on Logic and
  Communication in Multi-Agent Systems ({LCMAS} 2003)}}
  \emph{(\bibinfo{series}{Electronic Notes in Theoretical Computer Science},
  Vol.~\bibinfo{volume}{85})}. \bibinfo{publisher}{Elsevier},
  \bibinfo{address}{Amsterdam}, \bibinfo{pages}{1--12}.
\newblock
Issue 2.


\bibitem[\protect\citeauthoryear{Springall, Finkenauer, Durumeric, Kitcat,
  Hursti, MacAlpine, and Halderman}{Springall et~al\mbox{.}}{2014}]%
        {SpringallFD14}
\bibfield{author}{\bibinfo{person}{D. Springall}, \bibinfo{person}{T.
  Finkenauer}, \bibinfo{person}{Z. Durumeric}, \bibinfo{person}{J. Kitcat},
  \bibinfo{person}{H. Hursti}, \bibinfo{person}{M. MacAlpine}, {and}
  \bibinfo{person}{J.~A. Halderman}.} \bibinfo{year}{2014}\natexlab{}.
\newblock \showarticletitle{Security Analysis of the Estonian Internet Voting
  System}. In \bibinfo{booktitle}{\emph{Proceedings of the 2014 ACM SIGSAC
  Conference on Computer and Communications Security ({CCS} '14)}}.
  \bibinfo{publisher}{ACM}, \bibinfo{address}{New York, NY, USA},
  \bibinfo{pages}{703–715}.
\newblock


\bibitem[\protect\citeauthoryear{Wang and Dechesne}{Wang and Dechesne}{2009}]%
        {Wang09expressive}
\bibfield{author}{\bibinfo{person}{Y. Wang} {and} \bibinfo{person}{F.
  Dechesne}.} \bibinfo{year}{2009}\natexlab{}.
\newblock \showarticletitle{On expressive power and class invariance}.
\newblock \bibinfo{journal}{\emph{CoRR}}  \bibinfo{volume}{abs/0905.4332}
  (\bibinfo{year}{2009}), \bibinfo{pages}{1--18}.
\newblock


\end{thebibliography}


\end{document}